\definecolor{lightgray}{gray}{0.9}
\DeclareMathOperator{\argmin}{argmin }
\newtheorem{theorem}{Theorem}
\newtheorem{lemma}[theorem]{Lemma}
\newenvironment{proof}[1][Proof]{\noindent\textbf{#1.} }{\ \rule{0.5em}{0.5em}}
\newcommand{\XpX}{X^\prime X}
\newcommand{\Xpy}{X^\prime y}
\newcommand{\XpXj}[1]{X^{#1\prime} X^{#1}}
\newcommand{\Xpyj}[1]{X^{#1\prime} y^{#1}}
\newcommand{\Zpyj}[1]{Z^{#1\prime} y^{#1}}
\newcommand{\XpZj}[1]{X^{#1\prime} Z^{#1}}
\newcommand{\ZpZ}{Z^\prime Z}
\newcommand{\ZpZj}[1]{Z^{#1\prime} Z^{#1}}
\newcommand{\Ypy}{y^\prime y}
\newcommand{\Ypyj}[1]{y^{#1\prime} y^{#1}}
\newcommand{\XpWXj}[1]{X^{#1\prime} W^{#1}X^{#1}}
\newcommand{\XpWyj}[1]{X^{#1\prime} W^{#1}y^{#1}}
\newcommand{\XpZ}{X^\prime Z}
\newcommand{\Zpy}{Z^\prime y}
\title{\vspace{-1.5cm}\hspace{2.5cm}\textbf{(Frisch-Waugh-Lovell)}$^\prime$ \newline On the Estimation of Regression Models by Row}
\author{
Damian Clarke \\ \emph{Department of Economics, University of Exeter \& University of Chile \& MIPP} \\ \href{mailto:dclarke@fen.uchile.cl}{\color{red}{dclarke@fen.uchile.cl}}\\
\and Nicolás Paris Torres \\ \emph{Department of Economics, University of Chile} \\ \href{mailto:nparis@fen.uchile.cl}{\color{red}{nparis@fen.uchile.cl}} \\ 
\and Benjamín Villena-Roldán \\ \emph{Department of Economics, Universidad Andres Bello \& LM$^2$C$^2$ \& MIPP} \\ \href{mailto:benjamin@benjaminvillena.com}{\color{red}{benjamin@benjaminvillena.com}}
}
\date{\today}
\begin{document}
\begin{spacing}{1}
\maketitle
\thispagestyle{empty}
\begin{abstract}
%148 words (should not exceed 150 for various journals)
We demonstrate that regression models can be estimated by working independently in a row-wise fashion. We document a simple procedure which allows for a wide class of econometric estimators to be implemented cumulatively, where, in the limit, estimators can be produced without ever storing more than a single line of data in a computer's memory.  This result is useful in understanding the mechanics of many common regression models.  These procedures can be used to speed up the computation of estimates computed via OLS, IV, Ridge regression, LASSO, Elastic Net, and Non-linear models including probit and logit, with all common modes of inference.  This has implications for estimation and inference with `big data', where memory constraints may imply that working with all data at once is particularly costly. We additionally show that even with moderately sized datasets, this method can reduce computation time compared with traditional estimation routines.
\end{abstract}

\medskip
\textbf{Keywords:} Big data, estimation, regression, matrix inversion. \newline
\indent \textbf{JEL codes:} C55, C61, C87. \\

\vspace{3.5cm}
\hrule
{\footnotesize \noindent\textbf{Acknowledgements:} We thank Richard Blundell, Samuel P. Engle, Sebastian Kripfganz, James MacKinnon, and Jeffrey Wooldridge for their feedback and suggestions, and are grateful to Iván Gutierrez Martínez for excellent research assistance.  The authors acknowledge the Millenium Institute for Research in Market Imperfections and Public Policy (MIPP) for financial and institutional support.  Clarke: University of Exeter, University of Chile, IZA, MIPP and CAGE.  Paris Torres: University of Chile.  Villena Roldán: Universidad Andrés Bello, MIPP, and  LM$^2$C$^2$.}
\clearpage
\end{spacing}
\begin{spacing}{1.4}
\section{Introduction}
The Frisch-Waugh-Lovell theorem is a canonical result in econometrics, and the foundation of many modern econometric estimation procedures.  That a regression can be estimated by partitioning data column-wise is intuitive, and has a multitude of applications when brought to real data.  Perhaps surprisingly, especially in a time where datasets are growing ever-larger and more decentralised, relatively little attention has been paid to the \emph{row}-wise consideration of this problem.  In this paper we seek to address the question of whether and how regression models can be estimated when partitioning data by row. 

We show that many regression models can be estimated by partitioning data in blocks of rows and that these partitions can be arbitrarily small or large.  This implies that for a large class of regression models, there is no need for data ever to be stored in a single matrix, or ever stored in a computer's working memory.  As well as the conceptual elegance of this result, we show that it can be of substantial use, especially when data is large.  When data is so large that it escapes the working memory of a computer, this row-wise partitioning of data implies that estimation can proceed, with processing time simply scaling linearly with the number of observations.  However, even where data is not too large to fit in a computer's working memory, we show that this result may offer a speed-up over standard commercial regression implementations, where in practice, processing times tend to not scale linearly with observations.

The basic intuition of this result is that many regression models require taking sums over cross-products of matrices of data such as a series of independent variables $X$ of dimension $N\times K$.  A clear example is the OLS estimator $\widehat\beta_{OLS}=(X^\prime X)^{-1}X^\prime y$.  In practice, calculating $X^\prime X$ requires summing over all observations $i\in\{1,\ldots,N\}$ the product of each value of each independent variable within observations $i$, but \emph{does not} require cross-products taken across observations $j\neq i$.  Similarly, $X^\prime y$ requires that for a given observation $i$, the value of each of $k$ independent variables $x_{ki}$ be multiplied by $y_i$, but such a cross-multiplication is not required between observations.  This implies that one can arrive very simply to aggregates such as $X^\prime X$ and $X^\prime y$ without ever reading an entire dataset into memory.  Indeed, in the limit, one can calculate these quantities by reading a line of data for a single observation $i$ at a time, iterating over all $N$, but never holding more than a single line of raw data in memory.\footnote{A simple visualisation of this calculation is provided in Appendix \ref{app:simpleVis}.}  This may seem surprising given that regression models account fully for the interdependence between independent variables, but it is a base result of matrix algebra and the mechanics of multivariate regression. 

What is more, a similar sequential procedure can be conducted for the variance of estimates of regression coefficients, as well as standard goodness-of-fit parameters such as the R-squared, implying that standard errors, confidence intervals, and any hypothesis tests can also be calculated exactly without ever loading all data in memory.  We show that a similar logic can be used for alternative models such as two-stage least squares (2SLS), penalised regression models such as LASSO, Ridge and elastic net, and the estimation of probit and logit models using indirect least squares.  Similar results can be derived for models where maximum likelihood (ML) estimation procedures are used.  The methods discussed here are surprisingly flexible, also being feasible (and comparatively fast) with estimates which \emph{prima facie} one may believe require loading larger portions of data in memory.  For example, we show that similar \emph{row}-wise procedures can be used for cluster-robust variance-covariance estimates without ever reading data on an entire group of observations at once. 

In this paper we begin by establishing that linear regression models can be estimated row-wise, without ever opening the entire dataset.  We define a ``cumulative ordinary least squares'' algorithm, which is exactly identical to OLS in both point estimates and standard errors, as well as any of the other basic statistics desired which are commonly reported following the estimation of linear models by OLS.  This result has historical precedents in early computational literature in economics; see for example \citeN{Brownetal1953} who note that a specific variant of this procedure can be used.  However, we \emph{also} document that this result extends to virtually all commonly used alternative variance estimators and regression models.  We discuss the computational implementation of such an estimator, noting that one could elect to split data into arbitrarily small or arbitrarily large partitions, though in practice, partitions should be sufficiently small such that they do not approach the limits of a computer's working memory.

In our research, we highlight an intriguing aspect that sheds light on the efficiency of regression estimation. While drawing comparisons to the Frisch-Waugh-Lovell theorem, we argue that understanding the possibility of estimating regression in blocks--with minimal information required to be saved between iterations--is of significant interest. The Frisch-Waugh-Lovell theorem, known for its column-wise approach, provides a technique to reduce the total dimension of $K$ in calculations. In contrast, our paper introduces a `transposed' version, presenting a row-wise result that allows for the reduction of the total dimension of $N$ in calculations. Generally, and especially with the growing relevance of high-frequency and administrative datasets in economic research, $N$ is substantially greater than $K$, meaning that reductions in $N$ can lead to far greater computational savings than reductions in $K$. Notably, both the Frisch-Waugh-Lovell theorem and our paper  are motivated by computational considerations, highlighting the shared emphasis on addressing computational challenges in econometrics \cite{Mackinnon2022}.  However, beyond the computational elements of the paper, both of these results are theoretically elegant, and provide an understanding of the internal workings of regression models: one of the most commonly used tools of researchers in all empirical fields of economics.

%Our paper is motivated by the influential work of \citeN{FrischWaugh33} in econometrics. Their research focused on computational considerations and introduced the concept of 'partialing out' in regression analysis. Building upon this seminal work, our paper explores the mechanics of econometric tools. Specifically, we investigate an undocumented approach to regression estimation: estimating regression row-wise. By delving into alternative computational techniques, our paper contributes to the ongoing efforts to enhance the efficiency and effectiveness of econometric methods.  
This paper joins other studies from a range of settings which provide basic understanding of regression-based models (see for example \citeN{Slocynski2022,Abadie2003,StefanskiBoos2002,Gelbach2016,Angrist1998,Abadieetal2020,Solonetal2015}).  It also provides results which are potentially highly useful in cases where very large databases are used in econometric analysis.  Massive databases are increasingly common in econometric analyses \cite{Varian2014}, but supercomputers are not always available to process big chunks of data.  Indeed, \citeN{Mackinnon2022} calls for consideration of computational issues with large datasets, noting that ``[i]n recent years (\ldots) many interesting datasets seem to be becoming larger more quickly than computers are becoming faster.'' While true, the results in this paper suggest that many of the processes of interest in econometrics can be implemented in a partition-wise fashion, implying that memory costs can be avoided.  While an alternative solution to these issues is to simply gain access to super-computers or large server clusters, this solution may be infeasible for individuals with small research budgets or students, who nevertheless wish to use large datasets.  These results can thus also be viewed as democratising access to econometric tools.   Finally, we note that these results can offer substantial speed-ups for clustered bootstrapping, joining a literature which considers the computational efficiency of bootstrap procedures, and clustering in particular (see eg \shortciteN{Cameronetal2008,Roodmanetal2019,MacKinnon2023}), as well as for the consideration of tuning parameters in regularised regression models.

This paper is structured as follows.  In Section \ref{scn:CLS} we define the cumulative least squares procedure, showing its equivalence to standard estimation.  We begin by showing how this estimator works in cases where estimation proceeds by OLS assuming homoscedasticity, and then document how it holds in a broad range of other estimation and inference procedures.  Section \ref{scn:optimal} discusses the nature of the cumulative procedure, and considerations of optimal block sizes for estimation.  In Section \ref{scn:illustrations} we provide a number of illustrations of the performance of these methods compared to commonly-used commercial alternatives.  This includes controlled tests where sample sizes and covariate numbers are varied and computational efficiency is compared, as well as an applied example based on a sample of census data and demographic surveys and models with a large number of fixed effects.  In Section \ref{scn:conclusion} we provide some additional discussion and conclusions.

\section{Cumulative Least Squares}
\label{scn:CLS}
\subsection{Cumulative Ordinary Least Squares}
\label{sscn:COLS}
Suppose we wish to run a regression of a dependent variable $y$ on a set of $K$ covariates $x_1,x_2,...,x_K$, using a series of $i=1,\ldots,N$ observations.  Thus, data can be viewed as a matrix or database of size $N\times K$ independent variables which we will denote $X$, as well as an $N\times1$ vector $y$ for the dependent variable.  Throughout this paper we will adopt the notation that matrices are written as upper case italics, vectors are written as lowercase italics, and scalars are defined as required. Suppose also that computing the regression with all the data in memory is either infeasible or undesired due to memory constraints. The data can be partitioned row-wise in $J$ arbitrarily defined portions, where each portion, or block, is denoted $j$, and consists of $N_j=N/J$ observations.\footnote{To fix ideas, we will consider that $N_j$ is common across all blocks.  In Section \ref{scn:optimal} we will discuss optimal choices of $N_j$, not requiring that this number be equivalent across blocks.}  The blocks are mutually exclusive and cover all observations such that $\sum_{j=1}^JN_j=N$.  We use the notation $X^j$ to denote block $j$ of size $N_j$ of the independent variables, and similarly $y^j$ is used to denote block $j$ of the dependent variable.

Consider the OLS estimator of the parameter $\widehat\beta$.  The standard OLS estimator can be written as follows:
\footnotesize
\begin{eqnarray}
\label{eqn:CLS1} 
\widehat\beta_{OLS}=(\XpX)^{-1}\Xpy &\equiv&  \left( \left(
  \begin{array}{cccc}
    X^{1\prime} & X^{2\prime} & \cdots & X^{J\prime} \\
  \end{array}\right)\left(
  \begin{array}{c}
    X^1 \\
    X^2 \\
    \vdots \\
    X^J
  \end{array}
  \right)
  \right)^{-1}  \left(
  \begin{array}{cccc}
    X^{1\prime} & X^{2\prime} & \cdots & X^{J\prime}\\
  \end{array}\right) \left(
  \begin{array}{c}
    y^1 \\
    y^2 \\
    \vdots \\
    y^J
  \end{array}
  \right) \\ \label{eqn:CLS2} 
  &=& \Bigl(\XpXj{1} + \XpXj{2} + \cdots + \XpXj{J}\Bigr)^{-1}\Bigl(\Xpyj{1} + \Xpyj{2} + \cdots + \Xpyj{J}\Bigr)
\end{eqnarray}
\normalsize
where in (\ref{eqn:CLS1}) the $K\times N$ matrix $X^\prime$ is re-expressed (identically) as a series of horizontally concatenated $K\times N_j$ matrices, and the $N\times K$ matrix $X$ is similarly re-expressed as a series of vertically concatenated $N_j\times K$ matrices.  The $N \times 1$ vector $y$ is also re-written as a series of vertically stacked sub-vectors  of dimension $N_j \times 1$. Based on the properties of matrix multiplication, it can easily be seen that elements from each sub-matrix or vector will be interacted only with themselves, and no products are required across blocks.  The re-expressed version of $\widehat\beta_{OLS}$ in (\ref{eqn:CLS2}) makes clear that $\XpX$ can thus be re-written as the summation over the series of $J$ matrices $\XpXj{j}$ which are each of dimension $K\times K$, and a similar procedure can be followed for $\Xpy$.  

This suggests that a cumulative procedure can be followed, as laid out formally in Algorithm \ref{alg:CLS} below.  Specifically, for ease of notation denote $\XpXj{j}\equiv\Sigma_j$ and $\Xpyj{j}\equiv\Upsilon_j$. Define as $\Sigma_{1\sim j}$ the summation $\Sigma_1+\ldots\Sigma_j$, and $\Upsilon_{1\sim j}=\Upsilon_1+\ldots+\Upsilon_j$. %where by definition $\Sigma_{1\tilde 1}=\Sigma_1$ $X$
Then, to estimate (\ref{eqn:CLS2}), initially a single block of data can considered, and the quantities $\Sigma_1$ and $\Upsilon_1$ calculated.  In the following step, a new block of data can be consulted, the quantities $\Sigma_2$ and $\Upsilon_2$ calculated, and the preceding quantities summed to provide $\Sigma_{1\sim 2}$ and $\Upsilon_{1\sim 2}$. In following steps, accumulated quantities $\Sigma_{1\sim j-1}$ and $\Upsilon_{1\sim j-1}$ are received at the beginning of each stage, the $\Sigma_{j}$ and $\Upsilon_{j}$ are calculated, and the step ends with $\Sigma_{1\sim j}$ and $\Upsilon_{1\sim j}$. A key element of this procedure is that in each stage, only a single block of data of size $N_j\times (K+1)$ needs to be read into memory, with the results stored in a single accumulated matrix and vector $\Sigma_{1\sim j}$ and $\Upsilon_{1\sim j}$.  As $\Sigma$ and $\Upsilon$ are of dimensions $K\times K$ and $K\times 1$ respectively, this makes clear that we simply need to keep track of small matrices in an ongoing fashion, and never house more than $N_j$ observations in memory at a single time, where $N_j$ can be an arbitrarily small value (in the limit, this could even be 1).\footnote{This particular limit case where $N_j=1$ and estimation occurs via OLS to generate the matrix $\Sigma_{1\sim J}$ is mentioned in \citeN{Brownetal1953}.} The OLS estimate $\widehat\beta_{OLS}$ is \emph{only} calculated by matrix inversion (or alternative procedures such as Gauss-Jordan elimination) once the full matrices $\Sigma_{1\sim J}\equiv\XpX$ and $\Upsilon_{1\sim J}\equiv\Xpy$ are calculated, implying that potentially costly matrix inversions are not required at every step.

The above process allows for point estimates to be recovered from independent partitions of the database.  What's more, inference on regression parameters can be conducted in a similar partition-wise manner.  Assuming homoscedasticity (alternative inference procedures are considered in Section \ref{sscn:inference}), the well-known formula for the variance of OLS regression parameters is $\widehat{V}(\widehat\beta_{OLS})=\widehat{\sigma}^2_u(X^\prime X)^{-1}$.
The quantity $(X^\prime X)$ is already accumulated as laid out above.  The second element of the variance is  $\widehat{\sigma}^2_u\equiv \widehat{u}^\prime\widehat{u}/(N-K)$, where the regression residuals $\widehat{u}=y-X\widehat\beta_{OLS}=(I-X(\XpX)^{-1}X^\prime)y=M_Xy$, with $M_X$ being the annihilator matrix, an idempotent matrix. Hence:
\begin{equation}
\label{eqn:UU}
\widehat{u}'\widehat{u} = y^\prime M_X y = y^\prime y - y^\prime X (\XpX)^{-1} \Xpy,
\end{equation}
which consists of three separate elements: $\XpX$, $\Xpy$ and its transpose, and $\Ypy$.  The first two of these elements are already calculated iteratively in the estimation of point estimates as $\Sigma_{1\sim J}$ and $\Upsilon_{1\sim J}$.  The only additional element required to calculate $\widehat\sigma^2_u$ is thus $\Ypy$, which can similarly be calculated in a cumulative manner in the same fashion as $\XpX$ or $\Xpy$ in (\ref{eqn:CLS2}): $y^\prime y=(\Ypyj{1}+\Ypyj{2}+\cdots+\Ypyj{J})$.  As above, we will refer to $\Ypyj{j}\equiv\Psi_j$, and $\Psi_{1\sim j}=\Psi_1+\ldots +\Psi_j$.  Hence, calculating $\Ypy$ occurs iteratively, where at each step the accumulated $\Psi_{1\sim j-1}$ is the starting point, an additional block of $y$ of dimension $N_j\times 1$ is loaded, and the step ends with $\Psi_{1\sim j}=\Psi_{1\sim j-1}+\Psi_{j}$.\footnote{In Appendix \ref{scn:ULS} we note that this result can be documented in an alternative way, where rather than accumulating matrices $\Sigma, \Upsilon$, and $\Psi$ at each step, the estimate $\widehat\beta_{1\sim j}$ is directly updated.  This result is based on the matrix inverse lemma \cite{Woodbury50}.  However, given that this is less efficient than the cumulative procedure described here, we document this only as a curiosity.}

Formally, the entire estimation process to arrive to exact OLS point estimates and standard errors is laid out in Algorithm \ref{alg:CLS}.  Note that given the information calculated in Algorithm \ref{alg:CLS}, other standard regression statistics can be generated following estimation, including $t$-tests for each regression parameter against arbitrary null hypotheses, global $F$-tests of regressions, and $R^2$ or adjusted $R^2$ measures. For example, in order to compute the $R^2$, we can use the residual sum of squares (RSS), $\widehat{u}^\prime \widehat{u}$ calculated above, and additionally require the total sum of squares (TSS), given that $R^2=1-\frac{RSS}{TSS}$.  The TSS is simply:
\[
TSS = \sum_{i=1}^N \left(y_i - \overline{y}\right)^2=\sum_{i=1}^N y_i^2 - N\left(\frac{1}{N}\sum_{i=1}^N y_i \right)^2 = \sum_{i=1}^N y_i^2 - \frac{1}{N}\left(\sum_{i=1}^N y_i \right)^2,
\]
and both $y_i^2$ and $y_i$ can be summed iteratively, with the only addition to the statistics already laid out above being the cumulative sum of $y$ squared, and grand mean $\overline{y}$. %, which is just an operation of $O(2N)$, which seems substantially less demanding that the current approach.

\begin{algorithm}
\caption{Cumulative ordinary least squares}
\label{alg:CLS}
Inputs: Database consisting of ($y$,$X$), block size $b$. \\
Result: Point estimate $\widehat\beta_{OLS}$ and variance-covariance matrix $\widehat{V}(\widehat\beta_{OLS})$.  \\ \ \\
1. Set $i=1$ and $j=b$. Load into memory partition of data covering $y,X$ in observations $i$ to $j$. \;
2. Calculate $\Sigma_1$, $\Upsilon_1$, and $\Psi_i$ \;
3. If observations $i$ to $j$ contain end of file, set $e=1$, otherwise, set $e=0$ \;
\While{$e\neq1$}
{
4. Replace $i=i+b$ and $j=j+b$. Load into memory partition of data covering $y,X$ in observations $i$ to $j$. \;
5. Calculate $\Sigma_j$, $\Upsilon_j$, and $\Psi_j$. \;
6. Calculate $\Sigma_{1\sim j}=\Sigma_{1\sim j-1}+\Sigma_j$,
$\Upsilon_{1\sim j}=\Upsilon_{1\sim j-1}+\Upsilon_j$, 
and
$\Psi_{1\sim j}=\Psi_{1\sim j-1}+\Psi_j$ \;
7. If observations $i$ to $j$ contain end of file, set $e=1$. \;
}\textbf{end}\\ \ \\
8. Calculate \\
\[\widehat{\beta}_{OLS}=\left(\Sigma_{1\sim J}\right)^{-1}\Upsilon_{1\sim J} \qquad \text{ and } \qquad \widehat{V}\left(\widehat\beta_{OLS}\right)=\widehat\sigma^2(X'X)^{-1}, \] where $\widehat\sigma^2=\left[\Psi_{1\sim J}-\Upsilon^\prime_{1\sim J}(\Sigma_{1\sim J})^{-1}\Upsilon_{1\sim J}\right]/(N-K)$.\\
\end{algorithm}

\subsection{Alternative Estimators}
\label{sscn:alternative}
While the previous implementation allows for the generation of exact equivalents to OLS estimates and their standard errors (and derived statistics), this cumulative procedure can be applied far more widely.  Indeed, the procedure can be used for \emph{any} estimator which can be expressed as a sum of squares-based procedure, where the relevant database level processing requires sums over observation-level products.  We document how the cumulative process works in a range of estimators below.  We then document that similar logic can be used to arrive to estimates which are based upon other techniques such as maximum likelihood.

\paragraph{Weighted Least Squares}
A simple extension to the procedure noted in section \ref{sscn:COLS} is weighted least squares, where some diagonal weight matrix $W$ is incorporated, such that the estimator is defined as:
\begin{equation}
\small
\label{eq:WLS}
\widehat\beta_{WLS}=(X^\prime W X)^{-1}X^\prime W y =\left(\XpWXj{1} + \cdots + \XpWXj{J}\right)^{-1}\left(\XpWyj{1} + \cdots + \XpWyj{J}\right).
\end{equation}
Here, it follows that an identical updating procedure can be implemented to that laid out in the case of OLS, however additionally, a variable $w$ contains the weight associated with each observation.    In this case, the cumulative estimation procedure consists of holding in memory a single block of data $(y_j, w_j, X_j)$ and generating matrix $W_j$, an $N_j\times N_j$ matrix with elements $w_j$ on the principal diagonal.  In the limit, if $N_j=1$, the matrix $W_j$ consists simply of the scalar $w_j$.  Then, elements $\XpWXj{j}$ and $\XpWyj{j}$ are calculated, and summed cumulatively, before in a final step the WLS estimator is calculated by matrix inversion or similar.

%NOTE FOR INFERENCE BELOW
%The inference assuming frequency weights (fw) and analytical weights (aw) will be as follows:

%For frequency weights (fw):
%\begin{equation*}
%\widehat{V}_{fw}(\widehat\beta_{WLS}) = \frac{\widehat\sigma_w^2}{(W - K)}
%\end{equation*}

%For analytical weights (aw):
%\begin{equation*}
%\widehat{V}_{aw}(\widehat\beta_{WLS}) = \frac{\widehat\sigma_w^2}{(N - K)}
%\end{equation*}
%Where $\sigma_w^2= = Y^\prime W Y - Y^\prime WX (X^\prime WX)^{-1} X^\prime WY$

%INFERENCE --- NOTE probabily weights, frequency weights...

%\textcolor{red}{NOTE CAVEAT THAT GLS CAN BE SEEN AS A LIMIT CASE WHERE THIS WILL NOT *NECESSARILY* WORK GENERICALLY?}

\paragraph{Instrumental Variables and Two-Stage Least Squares Estimators}
Both instrumental variables (IV) and Two-Stage Least Squares (2SLS) estimators can be similarly estimated in cumulative form.  To see this, note that the IV estimator in a linear model is
$\widehat\beta_{IV}=(\XpZ)^{-1}\Zpy$ and the 2SLS estimator in a linear model is: $\widehat{\beta}_{2 SLS}=\left(\XpZ\left(Z^{\prime} Z\right)^{-1} Z^{\prime} X\right)^{-1}\left(\XpZ \left(Z^{\prime} Z\right)^{-1} \Zpy\right)$, 
where $Z$ refers to an $N\times L$ dimensional vector of exogenous variables, with $L\geq K$, and in the case of IV, $L=K$.  Thus, both $\widehat\beta_{IV}$ and $\widehat\beta_{2SLS}$ can be generated cumulatively following a similar procedure to \eqref{eqn:CLS1}, however in the case of $\widehat\beta_{IV}$ $\XpX$ is substituted for $\XpZ$, and $\Xpy$ is substituted for $\Zpy$.  In the case of 2SLS, an additional quantity $Z^\prime Z$ must be calculated, though identically to $\XpX$, this simply requires cross-products on all variables $Z$ within each observation $i$, and as in \eqref{eqn:CLS2}, $\ZpZ\equiv(\ZpZj{1}+\ZpZj{2}+\ldots+\ZpZj{J})$.  Once again, estimation can proceed in this case in a cumulative fashion, where in each block the quantities $\XpZj{j}$, $\ZpZj{j}$ and $\Zpyj{j}$ are calculated, summed cumulatively, and ultimately, the quantity $\widehat\beta_{2SLS}$ is calculated by matrix inversion and multiplication, or other standard procedures such as QR decomposition or single value decomposition.  

%NEED TO NOTE INFERENCE
%and inference assuming homoscedasticity is conducted using:
%$$
%\widehat{V}\left[\widehat{\beta}_{2 SLS}\right]=\widehat{\sigma}_u^2\left[X^{\prime} Z\left(Z^{\prime} Z\right)^{-1} Z^{\prime} X\right]^{-1}
%$$

\paragraph{Ridge, LASSO and Elastic Net}
Frequently in cases where big data is used in economic models, practitioners wish to perform some sort of regularisation.  Fortunately, these cumulative procedures cross-over seamlessly to regularised models such as Ridge, LASSO and Elastic Net.  Additionally, in each case, the process of accumulation is such that work with the full dataset of dimension $N\times K$ can be viewed as a first data processing step, and the selection of tuning parameters can be conducted as a second step, without ever returning to full data.  

To see this, we first document the case of the Ridge regression, which given its use of the $\ell^2$ norm for shrinkage is particularly simple expositionally.  In the case of the Ridge regression, parameters are estimated as follows:
\[
\widehat\beta_{Ridge} = \argmin_\beta\left\{\left(\sum_{i=1}^N(y_i-X^\prime_i\beta)\right)+\lambda\sum_{j=1}^K\beta_j^2\right\},
\]
where $\lambda$ is a scalar tuning parameter determining the degree of shrinkage. This can equivalently be written as:
\begin{equation}
\label{eqn:ridge}
\widehat\beta_{Ridge} = (\XpX+\lambda I)^{-1}\Xpy 
\end{equation}
where $I$ is an identity matrix of size $K$.  Note that following the notation above, solving for $\widehat\beta_{Ridge}$ requires the quantity $\Sigma_{1\sim J}$, which we have documented can be calculated in a cumulative fashion, $\Upsilon_{1\sim J}$, which we have also documented can be calculated cumulatively, and an additional factor $\lambda I$, which is independent of the number observations.  Hence, estimation in the case of Ridge is identical to that documented in OLS in section \ref{sscn:COLS}, with the only difference being that after accumulating $\Sigma_{1\sim J}$ and $\Upsilon_{1\sim J}$, but prior to solving for $\widehat\beta_{Ridge}$, an additional $K\times K$ matrix is added to $\Sigma_{1\sim J}$.  This is laid out formally in Algorithm \ref{alg:CLSridge}.

\begin{algorithm}
\caption{Cumulative least squares for Ridge regression}
\label{alg:CLSridge}
Inputs: Database consisting of ($y$,$X$), block size $b$. \\
Result: Point estimate $\widehat\beta_{Ridge}$.  \\ \ \\
1. Set $i=1$ and $j=b$. Load into memory partition of data covering $y,X$ in observations $i$to $j$. \;
2. Calculate $\Sigma_1$ and $\Upsilon_1$. \;
3. If observations $i$ to $j$ contain end of file, set $e=1$, otherwise, set $e=0$ \;
\While{$e\neq1$}
{
4. Replace $i=i+b$ and $j=j+b$. Load into memory partition of data covering $y,X$ in observations $i$ to $j$. \;
5. Calculate $\Sigma_j$ and $\Upsilon_j$. \;
6. Calculate $\Sigma_{1\sim j}=\Sigma_{1\sim j-1}+\Sigma_j$ and
$\Upsilon_{1\sim j}=\Upsilon_{1\sim j-1}+\Upsilon_j$. \;
7. If observations $i$ to $j$ contain end of file, set $e=1$. \;
}\textbf{end}\\ \ \\
8. Select tuning parameter $\lambda$.  Then
calculate \\
\[\widehat{\beta}_{Ridge}=(\Sigma_{1\sim J} + \lambda I)^{-1}\Upsilon_{1\sim J}. \]\\
\end{algorithm}

Similar procedures can be conducted in the case LASSO and Elastic net, where first data can be accumulated to form $\Sigma_{1\sim J}$ and $\Upsilon_{1\sim J}$ and then, conditional on having processed the data of dimension $N\times K$ to a level of $K\times K$ (or $K\times 1$), and selecting a tuning parameter\footnote{We note below that our procedures can similarly be used very efficiently for $k$-fold cross validation; see Section \ref{sscn:groups}.},  estimates are calculated without ever returning to data at a level of $N\times K$ (or even $N_J\times K$).  To see this, note that the Lasso and Elastic net equivalents of (\ref{eqn:ridge}) are:
\begin{eqnarray}
\widehat\beta_{Lasso} &=& \argmin_\beta\left\{\left(\sum_{i=1}^N(y_i-X^\prime_i\beta)\right)+\lambda||\beta_j||_1\right\}, \nonumber \\
\widehat\beta_{Elastic Net} &=& \argmin_\beta\left\{\left(\sum_{i=1}^N(y_i-X^\prime_i\beta)\right)+\lambda_1||\beta_j||_1+\frac{\lambda_2}{2}||\beta_j||^2_2\right\}, \nonumber 
\end{eqnarray}
where $||\cdot||_p$ refers to the $\ell^p$ norm, and in the case of the Elastic net, $\lambda_1$ and $\lambda_2$ refer to the strength of the Lasso and Ridge penalties respectively.

Although the lack of the exclusive $\ell^2$ norm in Lasso and Ridge does not admit a simple least-squares solution as in (\ref{eqn:ridge}), they nevertheless can both be simply resolved using cumulative procedures and a single (accumulatory) pass through $N$ dimensional data.  Specifically, this can be implemented via coordinate descent, a standard way of computing parameters in Lasso and Elastic Net \cite{Fu1998}.  To see this note that for a specific parameter $\beta_j$, the coordinate descent algorithm for estimation can be written for Lasso as:
%NOTICE HERE THAT ESTIMATION JUST REQUIRES COORDINATE DESCENT, WHICH CAN BE DONE WITH $\Sigma_J$, $\Upsilon_J$, and other vectors of dimension $K$!!!  
\[
\widehat\beta^{\text{new}}_j = \text{sign}\left(\widehat\beta_j^{\text{old}}\right)\text{max}\left(|z_j|-\frac{\lambda}{N},0\right) 
\]
where $\widehat\beta_j^{\text{old}}$ is the value of $\widehat\beta_j$ at the previous iteration, $z_j$ is the $j$\textsuperscript{th} element of the vector $z=X^\prime(y-X\widehat\beta^{\text{old}})$ and sign($\cdot$) returns the sign of the argument.  Noting that $z$ can be re-expressed as $X^\prime y - \XpX\widehat\beta^{\text{old}}$ makes clear that the vector of parameters $\beta_j$ can be estimated by first using the cumulative procedure laid out previously, and then working with $\Xpy$ and $\XpX$ in coordinate descent, without ever returning to the original data.  A similar procedure can be used for Elastic Net given that in this case successive iterations of coordinate descent can be calculated as:
\[
\widehat\beta^{\text{new}}_j = \frac{z_j}{1+\lambda_2}\left(\text{sign}\left(\widehat\beta_j^{\text{old}}\right)\text{max}\left(|z_j|-\frac{\lambda_1}{1+\lambda_2},0\right)\right),
\]
where $z_j$ is the $j^{th}$ element of  $z=X^\prime(y-X\widehat\beta^{\text{old}})+\widehat\beta^{\text{old}}\cdot\lambda_2$, and $\lambda_1$ and $\lambda_2$ are Lasso and Ridge regularisation parameters.  Again, given that $z$ can be expressed as $X^\prime y - \XpX\widehat\beta^{\text{old}}+\widehat\beta^{\text{old}}\cdot\lambda_2$, estimation can proceed by, firstly, accumulating 
$\Sigma_{1\sim J}$ and $\Upsilon_{1\sim J}$ in a block-by-block or line-by-line fashion, and then implementing coordinate descent with, at most, matrices of dimension $K\times K$.

\paragraph{Binary Choice Models via Iteratively Reweighted Least Squares}
The previously defined estimators can be implemented in a single cumulative step, potentially offering substantial speed-ups compared to traditional estimators in cases where both cumulative and standard estimators are feasible, but where limits are close to met when all observations are housed in working memory (further discussion on relative performance of cumulative and naive procedures are provided in the following sections).  In the case of Binary Choice Models such as probit and logit models, cumulative procedures can similarly be implemented which exactly replicate non-cumulative procedures while at the same time never housing more than a small number of observations in memory.  However in these cases it is not possible to implement these estimators as single shot processes, but rather multiple passes through the $N$ rows of data must be conducted.  Thus, while these procedures provide feasible implementations of estimators when the entire dataset cannot be held in a computer's working memory, they are unlikely to be as fast as standard procedures when memory is not a limiting factor.

Nevertheless, to see that cumulative procedures can also be implemented in non-linear models, one alternative is to use Iteratively Reweighted Least Squares (IRLS).  IRLS allows for the estimation of the parameters in non-linear models in a step-wise fashion, where at each step the updated parameter estimates are based on a weighted least squares problem \cite{NelderWedderburn1972,Green1984}.  Based on this, cumulative procedures can be used to conduct least squares estimators in each iteration. Specifically, the IRLS procedure for binary outcome models consists of iteratively solving the following equation until $\widehat\beta^{\text{old}}$ and $\widehat\beta^{\text{new}}$ converge:
\begin{eqnarray}
\nonumber
\widehat\beta^{\text{new}}&=&(X^\prime WX)^{-1}X^\prime WZ \quad \text{where} \quad Z=X\widehat\beta^{\text{old}}+W^{-1}(y-p) \\
&=& \widehat\beta^{\text{old}} + (X^\prime WX)^{-1}X^\prime(y-p)
\label{eq:IRLS}
\end{eqnarray}
Here $y$ is an $N\times 1$ vector of outcome variables, and $p$ is predicted value for each unit $p_i(x_i,\widehat\beta^{\text{old}})$ based on the $1\times K$ vector of individual-level realisations $x_i$, such that $y-p$ represents prediction residuals.  $W$ is an $N\times N$ diagonal weight matrix with diagonal elements consisting of $p_i(x_i,\widehat\beta^{\text{old}})(1-p_i(x_i,\widehat\beta^{\text{old}}))$.  In the case of probit models, for example $p(x_i,\beta)\equiv\phi(x_i\beta)$, while in the case of logit models, $p(x_i,\beta)=\ln(x_i\beta/(1-x_i\beta))$.
The quantity in \eqref{eq:IRLS} consists of some starting value  $\widehat\beta^{\text{old}}$ which is taken as an input (in the first iteration, $\widehat\beta^{\text{old}}=0$), and a second component which can be calculated cumulatively in a block-wise fashion following \eqref{eq:WLS}. Thus one can estimate non-linear models where in each step a cumulative procedure is performed, and a solution is reached when the second term in \eqref{eq:IRLS} converges to 0.

%Given that we have previously shown that weighted least squares can be estimated in a cumulative fashion, this procedure thus allows for the estimation of non-linear models where in each step of \eqref{eq:IRLS} $\widehat\beta^{\text{old}}$ is taken as an input (in the first iteration, $\widehat\beta^{\text{old}}=0$), and instead of estimating the weighted least squares estimator in a block-wise fashion following \eqref{eq:WLS}, the transformed variable $Z$ from \eqref{eq:IRLS} is used in place of $Y$.

\paragraph{Maximum Likelihood and other M-Estimators}
The use of cumulative procedures like those described above can similarly be employed to with other classes of M-estimators where estimation is based on iterative optimisation procedures provided that observations are assumed to be independently sampled.\footnote{In cases where sampling is not assumed to be independent, generalisations of this procedure could be followed, but likelihood functions, and hence blocks in the data in cumulative procedures, would need to permit this dependence.  We discuss one such case where sampling is not assumed to be independent in Section \ref{sscn:inference} below.}  To see this, consider maximum likelihood estimation implemented using the Newton-Raphson method.  Estimation occurs iteratively, where at each stage the Hessian and Score matrices are evaluated based on the current iteration of $\beta$.  Specifically, estimation occurs as follows:
\begin{equation}
\label{eq:ML}
\widehat\beta^{\text{new}}=\widehat\beta^{\text{old}}-\left[\frac{\partial^2 \ell(\beta)}{\partial\beta\partial\beta^\prime}\right]_{\beta=\widehat\beta^{\text{old}}}^{-1}\left[\frac{\partial\ell(\beta)}{\partial\beta}\right]_{\beta=\widehat\beta^{\text{old}}},
\end{equation}
with the ML solution occurring when this equation converges.

When observations are independent, the Hessian and score matrices in ML are written as summations over observations $i$. For example, in the case of the logit regression:
\[
\frac{\partial\ell(\beta)}{\partial\beta}=\sum_{i=1}^N\bigg[y_iF(-x_i\beta)-(1-y_i)F(x_i\beta)\bigg]x^\prime_i \qquad \frac{\partial^2 \ell(\beta)}{\partial\beta\partial\beta^\prime}=-\sum_{i=1}^Nf(x_i\beta)x_i^\prime x_i,
\]
where $F(\cdot)$ and $f(\cdot)$ are the logit cdf and pdf respectively.\footnote{Similar examples can be easily provided for other common models estimated via ML.  In the case of the probit regression, these functions are written as summations over $i$ of the following form:
\begin{eqnarray}
\frac{\partial\ell(\beta)}{\partial\beta}&=&\sum_{i=1}^N\bigg[y_i\frac{\phi(x_i\beta)}{\Phi(x_i\beta)}-(1-y_i)\frac{\phi(x_i\beta)}{1-\Phi(x_i\beta)}\bigg]x^\prime_i  \nonumber \\ \frac{\partial^2 \ell(\beta)}{\partial\beta\partial\beta^\prime}&=&-\sum_{i=1}^N\phi(x_i\beta)\bigg[y_i\frac{\phi(x_i\beta)+x_i\beta\Phi(x_i\beta)}{\Phi(x_i\beta)^2}-(1-y_i)\frac{\phi(x_i\beta)-x_i\beta(1-\Phi(x_i\beta))}{[1-\Phi(x_i\beta)]^2}\bigg]x^\prime_ix_i.\nonumber
\end{eqnarray}
where $\phi(\cdot)$ and $\Phi(\cdot)$ are the normal pdf and cdf respectively.}
This suggests a cumulative procedure can be employed where a block of arbitrary size $N_j$ can be read in to memory and the Hessian and Score matrix can be calculated for this block $j$ based on the values $\beta=\widehat\beta^{old}$.  The summation for each matrix can be stored, and then a subsequent block of size $N_j$ can be read in, the Hessian and Score matrices can be calculated, and added to the previous values.  This process can be updated cumulatively until the end of the data is reached.  Finally, a new value for $\widehat\beta$ can be calculated as in \eqref{eq:ML}, either providing the ML estimate if convergence has occurred, otherwise data will be read again, and another iteration of \eqref{eq:ML} calculated.   In this case, as noted previously with IRLS, this procedure is feasible when large databases cannot be read into memory in their entirety, but is unlikely to be as fast as a standard ML procedures if the entire data \emph{can} be stored in memory.

\subsection{Grouped Estimation Procedures, Fixed Effect Estimators, Heterogeneity, and Cross-Validation}
\label{sscn:groups}
In the previous section, results were shown based on arbitrary divisions of the data into mutually exclusive blocks.  All of the previous results hold if rather than groups of data being based on positions, groups of data are based on some particular indicator.  Consider a variable $G$ capturing membership in some particular group, with group levels $g\in\mathcal{G}$.  Using notation $X_{g}$ and $y_{g}$ to indicate realisations of $X$ and $y$ respectively for observations where $G=g$, it is well known that the OLS estimate $\widehat\beta_{OLS}$ can be generated over groups as:
\begin{equation}
\label{eqn:grouped}
\widehat\beta_{OLS}=\left(\sum_{g\in\mathcal{G}}X_{g}^{\prime}X_{g}\right)^{-1}\left(\sum_{g\in\mathcal{G}}X_{g}^{\prime}y_{g}\right)
\end{equation}
What's more, as was the case previously, quantities $X_{g}^{\prime}X_{g}$ and $X_{g}^{\prime}y_{g}$ can be built up cumulatively from arbitrarily small portions of data.

In practice, this is simply a group-level generalisation of the procedure described in Algorithm \ref{alg:CLS}.  As in Section \ref{sscn:COLS}, consider data broken down into $J$ row-wise partitions, with each block denoted $j$ and consisting of $N_j$ observations.  For a particular group $g\in\mathcal{G}$ define $X_{g}^{j\prime}X^j_{g}\equiv\Sigma^{g}_j$, and similarly, $X_{g}^{j\prime}y^j_{g}\equiv\Upsilon^{g}_j$.  If no observations for group $g$ are present in block $j$, $\Sigma^{g}_j$ is simply defined to be a null matrix $O_{K,K}$, and  $\Upsilon^{g}_j$ a null vector $O_{K,1}$. As previously, $\Sigma^{g}_{1\sim j}$ refers to the summation $\Sigma^{g}_1+\ldots\Sigma^{g}_j$, and $\Upsilon^{g}_{1\sim j}=\Upsilon^{g}_1+\ldots+\Upsilon^{g}_j$.  A group-level generalisation of Algorithm \ref{alg:CLS} is described in Algorithm \ref{alg:GCLS} below.

\begin{algorithm}[ht!]
\caption{Grouped cumulative ordinary least squares}
\label{alg:GCLS}
Inputs: Database consisting of ($y$,$X$,$G$), block size $b$. \\
Result: Point estimate $\widehat\beta_{OLS}$ and variance-covariance matrix $\widehat{V}(\widehat\beta_{OLS})$. Aggregates $\Sigma^{g}_{1\sim J}$ and  $\Upsilon^{g}_{1\sim J}$ $\forall g\in\mathcal{G}$.  \\ \ \\
1. Set $i=1$ and $j=b$. Load into memory partition of data covering $y,X,G$ in observations $i$ to $j$. \;
\For{$g\in\mathcal{G}^{1}$}
{
2. Calculate $\Sigma^{g}_1$, $\Upsilon^{g}_1$, and $\Psi^{g}_i$ \;
}
3. If observations $i$ to $j$ contain end of file, set $e=1$, otherwise, set $e=0$ \;
\While{$e\neq1$}
{
4. Replace $i=i+b$ and $j=j+b$. Load into memory partition of data covering $y,X,G$ in observations $i$ to $j$. \;
\For{$g\in\mathcal{G}^{j}$}
{
5. Calculate $\Sigma^{g}_j$, $\Upsilon^{g}_j$, and $\Psi^{g}_j$. \;
6. If $\exists\ \Sigma^{g}_{1\sim j-1}$ calculate $\Sigma^{g}_{1\sim j}=\Sigma^{g}_{1\sim j-1}+\Sigma^{g}_j$,
$\Upsilon^{g}_{1\sim j}=\Upsilon^{g}_{1\sim j-1}+\Upsilon^{g}_j$, 
and
$\Psi^{g}_{1\sim j}=\Psi^{g}_{1\sim j-1}+\Psi^{g}_j$, otherwise, initialise $\Sigma^{g}_{1\sim j}=\Sigma^{g}_j$,
$\Upsilon^{g}_{1\sim j}=\Upsilon^{g}_j$, 
and
$\Psi^{g}_{1\sim j}=\Psi^{g}_j$ \;
}
7. If observations $i$ to $j$ contain end of file, set $e=1$. \;
}\textbf{end}\\ \ \\
8. Calculate $\Sigma_{1\sim J}=\sum_{g\in\mathcal{G}}\Sigma^{g}_{1\sim J}$, $\Upsilon_{1\sim J}=\sum_{g\in\mathcal{G}}\Upsilon^{g}_{1\sim J}$, and $\Psi_{1\sim J}=\sum_{g\in\mathcal{G}}\Psi^{g}_{1\sim J}$.  Then: \\
\[\widehat{\beta}_{OLS}=(\Sigma_{1\sim J})^{-1}\Upsilon_{1\sim J} \qquad \text{ and } \qquad \widehat{V}(\widehat\beta_{OLS})=\widehat\sigma^2\Sigma^{-1}_{1\sim J}, \] where $\widehat\sigma^2=\left[\Psi_{1\sim J}-\Upsilon^\prime_{1\sim J}(\Sigma_{1\sim J})^{-1}\Upsilon_{1\sim J}\right]/(N-K)$.\\
\end{algorithm}

\paragraph{Heterogeneity} An immediate implication of this group-level cumulative procedure is that instead of generating a single $K\times K$ matrix $\Sigma_{1\sim J}$ and $K\times 1$ vector $\Upsilon_{1\sim J}$, $N_G$ versions of these matrices will be generated, where $N_G$ refers to the distinct number of groups.  Estimation of overall OLS parameters can then occur following \eqref{eqn:grouped}, or any other grouped level estimator can similarly be generated.  However, given that group level statistics $\Sigma^{g}_{1\sim J}$ and $\Upsilon^{g}_{1\sim J}$ are also generated, identical models for any sub-samples can then be generated nearly instantaneously, without ever returning to individual level data.  This includes estimates for each specific group $g$, but also for aggregated groups, such as groups of states or groups of countries. In Section \ref{sscn:inference} we will return to show that this also offers substantial benefits for inference in cases of (blocked) bootstrap procedures.

\paragraph{Fixed Effect Estimators}
We can similarly use these group-level procedures to generate fixed-effect estimators, again without ever returning to individual level data.  To see how fixed effect estimators can also be estimated in a cumulative fashion, we will now double-index as $Y_{gt}$ an observation $t$ within group $g$ (this can be thought of, for example, as a case where observations are repeated within $g$ across time periods denoted $t$).  We are interested in estimating the parameter vector on some independent variables $X_{gt}$, while controlling for time-invariant group fixed effects $\mu_g$.  The fixed effect estimator can be generated from an OLS regression on within transformed data.  Specifically, this consists of estimating:
\begin{eqnarray}
y_{gt}-\bar{y}_g&=&(X_{gt}-\bar{X}_g)\beta_{FE}+(\mu_g-\bar{\mu}_g)+u_{gt}-\bar{u}_g \nonumber\\
\dot{y}_{gt}&=&\dot{X}_{gt}\beta_{FE}+\dot{u}_{gt}\nonumber
\end{eqnarray}
where $\dot{y}_{gt}$ denotes the within transformation of $y$, $\bar{y}_g$ refers to group-level means, and similarly for other variables. The term $u_{gt}$ is a time-varying stochastic error.  The fixed effect estimator is then written as below:
\begin{eqnarray}
\label{eqn:FE}
\widehat\beta_{FE}&=&\left(\sum_{g\in\mathcal{G}}\sum_{t=1}^T\dot{X}_{gt}^\prime \dot{X}_{gt}\right)^{-1}\left(\sum_{g\in\mathcal{G}}\sum_{t=1}^T\dot{X}_{gt}\dot{y}_{gt}\right) \nonumber \\
&=&\sum_{g\in\mathcal{G}}\sum_{t=1}^T\left(X_{gt}^\prime X_{gt}-\bar{X}_g^\prime\bar{X}_g\right)^{-1}\sum_{g\in\mathcal{G}}\sum_{t=1}^T(X_{gt}^\prime y_{gt}-\bar{X}_g^\prime\bar{y}_g).
\end{eqnarray}
The key insight in \eqref{eqn:FE} is that $\sum_{g\in\mathcal{G}}\sum_{t=1}^T(X_{gt}-\bar{X}_g)^\prime (X_{gt}-\bar{X}_g)=\sum_{g\in\mathcal{G}}\sum_{t=1}^T(X_{gt}^\prime X_{gt}-\bar{X}_g^\prime\bar{X}_g)$.  To see why this is the case, note that we can write $\bar{X}_g$ as: $M_gX_g$, where $M_g=I_g-1_g(1_g^\prime1_g)^{-1}1^\prime_g$ is a group-specific demeaning operator, and $1_g$ a matrix which indicates membership to group $g$ as a column of ones when the observation belongs to the group, and 0s otherwise.  Note also that $M_g$ is an idempotent matrix. 
Then, $\dot{X}_{gt}^\prime \dot{X}_{gt}=(X_{gt}-\bar{X}_g)^\prime (X_{gt}-\bar{X}_g)=[(I-M_g)X_{gt}]^\prime(I-M_g)X_{gt}$, and the matrix $(I-M_g)$ is symmetric and idempotent.  Thus, the preceding quantity can be written as $X_{gt}^\prime(I-M_g)X_{gt}=X_{gt}^\prime X_{gt}-\bar{X}_g^\prime\bar{X}_g$ as required.   Also note, that the $K\times K$ matrix $\bar{X}_g^\prime\bar{X}_g$ can be generated from an underlying $K\times 1$ vector of group level means and the number of observations in each group.  Specifically, refer to a group level vector of variable means as $\bar{x}_g \equiv (\bar{x}_{1g} \ \bar{x}_{2g} \ \cdots \ \bar{x}_{Kg})$.  Then $\bar{X}_g^\prime\bar{X}_g=N_g\times\bar{x}_g^\prime \bar{x}_g$, where $N_g$ is the number of observations in group $g$. Identical logic holds to show that $\dot{X}_{gt}\dot{y}_{gt}=(X_{gt}^\prime y_{gt}-\bar{X}_g^\prime\bar{y}_g)$, and $\bar{X}_g^\prime\bar{y}_g$ can be generated from group level averages  $\bar{x}_g$, a $K\times 1$ vector, and scalar $\bar{y}_g$. 

Given this, implementing fixed effect models using grouped data generated in a cumulative fashion is a straightforward extension of Algorithm \ref{alg:GCLS}.   For ease of exposition we define $\dot{X}^\prime_{g}\dot{X}_{g}\equiv\sum_{t=1}^T\dot{X}^\prime_{gt}\dot{X}_{gt}$, and $\dot{X}^\prime_{g}\dot{y}_{g}\equiv\sum_{t=1}^T\dot{X}^\prime_{gt}\dot{y}_{gt}$. From \eqref{eqn:grouped}, elements $X^\prime_{g}X_{g}$ and $X^\prime_{g}y_{g}$ have already been calculated cumulatively for all $g$.  The remaining step is to calculate $\bar{X}_g^\prime\bar{X}_g$ and $\bar{X}_g^\prime\bar{y}_g$ which only requires group-level variable means.  From this, $K\times K$ matrices $\dot{X}_{g}^\prime \dot{X}_{g}$ and $K\times1$ vector $\dot{X}_{g}^\prime \dot{y}_{g}$ can be generated, and the fixed effect estimator \eqref{eqn:FE} can be calculated as:
\[
\widehat\beta_{FE}=\left(\sum_{g\in\mathcal{G}}\dot{X}_{g}^\prime \dot{X}_{g}\right)^{-1}\left(\sum_{g\in\mathcal{G}}\dot{X}_{g}\dot{y}_{g}\right)
\]
Similar cumulative procedures can be followed for two-way fixed effect models using the double within-transformation \cite{Baltagi2001,Wooldridge2021}.  For example for balanced panels over group $g$ and time $t$, two way transformations $\ddot{X}_{gt}=X_{gt}-\bar{X}_g-\bar{X}_t+\bar{X}$ and $\ddot{y}_{gt}=y_{gt}-\bar{y}_g-\bar{y}_t+\bar{y}$ can be calculated, and similar procedures followed as in the fixed effect case.\footnote{This results follows from the case of single demeaning.  However, here both group and time fixed effects need to be removed. Noting that we can now define the double-demeaning operation as $M_{gt}=[I_{gt}-1_g(1_g^\prime 1_g)^{-1}1_g^\prime-1_t(1_t^\prime 1)^{-1}1^\prime+1(1^\prime 1)^{-1}1^\prime]$, and hence write $\ddot{X}$ as $M_{gt}X_{gt}$, then $\ddot{X}^\prime\ddot{X}=X_{gt}^\prime M^\prime_{gt}M_{gt}X_{gt}$.  However, $M_{gt}$ is idempotent, and so $\ddot{X}^\prime\ddot{X}=(X_{gt}^\prime X_{gt}-\bar{X}^\prime_{g}\bar{X_g}-\bar{X}^\prime_{g}\bar{X_g}+\bar{X}^\prime\bar{X})$.  This then suggests a simple and feasible process for concentrating out two-way (or higher order) fixed effects by grouping aggregates $\Sigma_{1\sim J}$ and $\Upsilon_{1\sim J}$ over $g$ and $t$, and calculating group-specific, time-specific, and overall means, which can then be used to estimate $\widehat\beta_{FE}$ after processing all data.}

\paragraph{Returning Fixed Effects}
Generally, when fixed effect models are implemented, the interest is in estimating the coefficients and standard errors on time-varying variables, and hence a fixed effect estimator like \eqref{eqn:FE}
is appropriate.  However, in cases where estimates and standard errors on fixed effects themselves are also desired, cumulative least squares procedures offer a particularly efficient way to generate these estimates.

To see this, note that in the case of mutually exclusive fixed effects, we can write:
\[
X^\prime X = \left( 
  \begin{array}{cccccc}
  \sum_{i=1}^Nx_{1i}x_{1i} & \cdots & \sum_{i=1}^Nx_{1i}x_{Ki} & 0 & \cdots & 0  \\
  \vdots & \ddots & \vdots & \vdots & \ddots & \vdots  \\
  \sum_{i=1}^Nx_{Ki}x_{1i} & \cdots & \sum_{i=1}^Nx_{Ki}x_{Ki} & 0 & \cdots & 0  \\
  \bar{x}_{1,g_{1}} & \cdots & \bar{x}_{K,g_{1}} & N_{g_1}& \cdots & 0  \\
  \vdots & \ddots & \vdots & \vdots & \ddots & \vdots  \\
  \bar{x}_{1,g_{K}} & \cdots & \bar{x}_{K,g_{K}} & 0 & \cdots & N_{g_K}  \\
  \end{array}\right)
\qquad
X^\prime y = \left( 
  \begin{array}{cccccc}
  \sum_{i=1}^Nx_{1i}y_{i} \\
  \vdots \\
  \sum_{i=1}^Nx_{Ki}y_{i} \\
  \bar{y}_{g_{1}}\\
  \vdots  \\
  \bar{y}_{g_{K}}\\
  \end{array}\right),
\]
where here we assume data is ordered such that first time-varying variables are included in $X$, and then group fixed effects.  In this case, the resulting matrix $X^\prime X$ simply consists of the $K\times K$ matrix $\Sigma_{1\sim J}$ in the top-left corner (where here $K$ refers to time-varying variables, a $N_G\times K$ matrix of group means in the bottom left corner, the matrix $O_{K,N_G}$ in the top right-hand corner, and a $N_G\times N_G$ diagonal matrix containing the number of observations in each group on the main diagonal.  Similarly, $X^\prime y$ simply consists of the vector $\Upsilon_{1\sim J}$ in positions 1 to $K$, and then $N_G$ group-level means below. In this case, the only required information beyond elements already stored in standard cumulative procedures ($\Sigma_{1\sim J}$ and $\Upsilon_{1\sim J}$), are group level means and observation numbers, which can be trivially estimated cumulatively.  This thus suggests that fixed effect estimators can be estimated directly and efficiently \emph{including all fixed effects} in a sequential procedure.

\paragraph{Cross-Validation}
In Section \ref{sscn:alternative} we noted that cumulative procedures could be used for models such as Ridge, LASSO and elastic net, where commonly tuning parameters are chosen.  Often, such tuning parameters are chosen through $k$-fold cross-validation (see, \emph{eg} \citeN{WuWang2020}).  We showed previously that the tuning parameter $\lambda$ in these models can be chosen after accumulating matrices $X^\prime X$ and $X^\prime y$ (see for example the case of Ridge regression in \eqref{eqn:ridge}).  If we follow Algorithm \ref{alg:GCLS}, where the group variable is simply a discrete uniform random variable taking values between 1 and $k$, resulting matrices $X_1^\prime X_1, \cdots, X_K^\prime X_K$, and $X_1^\prime y_1, \cdots, X_K^\prime y_K$, can be used for $k$-fold cross validation in an efficient way.

To see this, note that cross validation consists of a procedure where for a tuning parameter $\lambda$, a specific group $g$ is held out, and coefficients $\widehat\beta_{-g,Ridge}(\lambda)$ estimated using the remaining groups.  Within group $g$, the Mean Squared Error associated with this parameter is then calculated as $MSE=\frac{1}{N_{g}}||y_g-X_g\widehat\beta_{-g,Ridge}(\lambda)||^2$.  A similar procedure is then conducted for each of the $N_G$ groups, and the MSE associated with $\lambda$ is calculated as the sum of the group-specific MSEs.  Note that this quantity $\frac{1}{N_{g}}||y_g-X_g\widehat\beta_{-g,Ridge}(\lambda)||^2$ can be rewritten as: 
\begin{align*}
&\frac{1}{N_{g}}\left[(y_g-X_g\widehat\beta_{-g,Ridge}(\lambda))^\prime(y_g-X_g\widehat\beta_{-g,Ridge}(\lambda))\right]\\
&=y_g^\prime y_g-2\widehat\beta_{-g,Ridge}(\lambda)^\prime X_g^\prime y_g+\widehat\beta_{-g,Ridge}(\lambda)^\prime X_g^\prime X_g\widehat\beta_{-g,Ridge}(\lambda)
\end{align*}
Each of the quantities $X_g^\prime y_g$, $X_g^\prime X_g$ and $y_g^\prime y_g$ are already calculated in a cumulative fashion, implying that the MSE for a given lambda can be calculated entirely from cumulatively calculated aggregates, and MSE-optimal tuning parameters chosen as the value of $\lambda$ which minimises this MSE.
%HERE COULD ADD IN $g$ subscripts properly

\subsection{Alternative Inference Procedures}
\label{sscn:inference}
In Section \ref{sscn:COLS} we documented that inference could be conducted in a cumulative fashion in the same way as point estimates, and this required no other special procedures, apart from the accumulation of $\Psi_{1\sim J}$, which is needed to calculate the variance-covariance matrix but not parameter estimates.  This can all be done in a single pass through blocks of the data.  %Similar procedures can be conducted for IV
However, this relies on a homoscedasticity assumption. Here we discuss how inference can proceed in alternative settings.

\subsubsection{Heteroscedasticity Robust Standard Errors}
In cases where heteroscedasticity-robust standard errors are desired, the well-known heteroscedasticity-robust estimator can be implemented cumulatively.  The HC1 variance estimator for OLS is written as:
\[
\widehat{V}(\widehat\beta_{OLS})_{HC1} = \frac{N}{N-K}(X^\prime X)^{-1}\left[\sum_{i=1}^N\widehat{u}_i^2x^\prime_ix_i\right](X^\prime X)^{-1}
\]
From Section \ref{sscn:COLS}, we already know that $X^\prime X=\Sigma_{1\sim J}$ can be generated cumulatively.  Similarly, both $K$ and $N$ can be read trivially from data.  If $\widehat{u}_i^2$ is known, the central component $\sum_{i=1}^N\widehat{u}_i^2x^\prime_ix_i$ could be calculated cumulatively: this value, which we will refer to as $\Omega$ could be initialised as a null matrix $O_{K,K}$, and in each block of the dataset when an observation $i$ is read in, the quantity $\widehat{u}_i^2x^\prime_ix_i$ calculated, and added to all previous values, as laid out in the Algorithm \ref{alg:HC1} below.  As above, we will define $\Omega_{j}\equiv\sum_{i\in j}\widehat{u}_i^2x_i^\prime x_i$, and $\Omega_{1\sim j}\equiv \Omega_1+\cdots\Omega_j$.

The issue here however is that when the data is first loaded in blocks, we cannot calculate $\widehat{u}_i=(y_i-X_i\widehat{\beta}_{OLS})$, as this requires $\widehat{\beta}_{OLS}$, which is not known until an entire pass through the data has been completed.  Thus, while heteroskedasticity robust estimates can be calculated in a cumulative fashion, this requires the data be read in a cumulative fashion a second time.  In particular, first Algorithm \ref{alg:CLS} should be run to calculate $\widehat\beta_{OLS}$, and then Algorithm \ref{alg:HC1} be run with $\widehat\beta_{OLS}$ as an input. However, apart from having to return to read the data, there is no particular memory restriction which implies that this procedure will not be feasible.  The only addition is a single accumulated $K\times K$ matrix $\Omega_{1\sim J}$. Similar procedures can be conducted for IV and other estimators.

\begin{algorithm}
\caption{\small Cumulative Estimation of Heteroscedasticity-Robust Variance-Covariance Matrix}
\label{alg:HC1}
Inputs: Database consisting of ($y$,$X$), block size $b$. Point estimate $\widehat\beta_{OLS}$, and $\Sigma_{1\sim J}$ from Algorithm \ref{alg:CLS}.\\
Result: Variance-covariance matrix $\widehat{V}(\widehat\beta_{OLS})_{HC1}$.  \\ \ \\
1. Set $i=1$ and $j=b$. Load into memory partition of data covering $y,X$ in observations $i$ to $j$ \;
2. Calculate $\Omega_1$ \;
3. If observations $i$ to $j$ contain end of file, set $e=1$, otherwise, set $e=0$ \;
\While{$e\neq1$}
{
4. Replace $i=i+b$ and $j=j+b$. Load into memory partition of data covering $y,X$ in observations $i$ to $j$. \;
5. Calculate $\Omega_j$. \;
6. Calculate $\Omega_{1\sim j}=\Omega_{1\sim j-1}+\Omega_j$. \;
7. If observations $i$ to $j$ contain end of file, set $e=1$. \;
}\textbf{end}\\ \ \\
8. Calculate: $\widehat{V}(\widehat\beta_{OLS})_{HC1} = \frac{N}{N-K}\Sigma_{1\sim J}^{-1}\Omega_{1\sim J}\Sigma_{1\sim J}^{-1}.$
\end{algorithm}

%Here we need to note:
%\begin{itemize}
%    \item This can still be done entirely in groups
%    \item We need to do a second pass, because we require $\widehat{u}$?
%    \item But the inner term of the variance is an `accumulative' $K\times K$ matrix, so there is no additional memory cost.
%\end{itemize}

%Actually maybe we don't need a second pass!  Check if the below is all right... Look at the inner matrix which is $K\times K$.

%\begin{eqnarray}
%\sum_{i=1}^N\widehat{u}_i^2x^\prime_ix_i = (y_i-)
%\end{eqnarray}

\subsubsection{Cluster-Robust Variance Covariance Matrix}
\label{sscn:clusterVC}
Similarly, in the case of standard closed-form cluster-robust variance-covariance estimators, a second pass through of the data is required to calculate cumulative standard errors.\footnote{In Section \ref{sscn:bootstrap} we lay out an extremely efficient clustered bootstrap procedure in which it is not necessary to return to individual-level data.}  In this case, slightly more information must be stored, namely an additional vector of size $K$ for each of the $N_G$ groups over which clustering occurs, but unless both $K$ and $N_G$ are exceedingly large, this should not generate a problem for the feasibility of these procedures.  Perhaps somewhat surprisingly, while clustered variance-covariance matrices account for dependence among observations, it is never necessary for data for an entire cluster to be housed in a computer's working memory in order to cluster standard errors by group.

%, cumulative procedures \emph{do} require a more data %demanding process.  Specifically, cluster robust variance-covariance matrices can only be estimated \emph{if} all the data for a specific cluster can be held in data at a specific period of time.  In cases where the number of clusters are large, and the number of observations per cluster are hence relatively small, cumulative procedures are likely to work well.  However, if few clusters of data exist, and if each cluster is very large, even holding a single cluster of data in memory at a single point of time may be infeasible.  For this reason, we first lay out here the cumulative procedure to calculate the analytic cluster-robust variance estimator, before, in Section \ref{sscn:bootstrap} laying out an extremely efficient clustered bootstrap procedure.

To see this, note that the standard cluster-robust variance-covariance estimator is written as follows.
\[
\widehat{V}(\widehat\beta_{OLS})_c = \frac{N-1}{N-K}\frac{N_G}{N_G-1}(X^\prime X)^{-1}\left[\sum_{g\in\mathcal{G}}X^\prime_g(\widehat{u}_g\widehat{u}^\prime_g)X_g\right](X^\prime X)^{-1}
\]
As previously $g$ refers to groups over which clustered standard errors are desired, and $N_G$ refers to the total number of groups. As in the case of the HC1 estimator, observation, group, and covariate quantities can be easily read in a cumulative fashion from data, and $X^\prime X$ is similarly calculated cumulatively.  However, here we additionally require the quantity $\Omega_g\equiv\sum_{g\in\mathcal{G}}X^\prime_g(\widehat{u}_g\widehat{u}^\prime_g)X_g$. For expositional clarity, note that $\sum_{g\in\mathcal{G}}X^\prime_g(\widehat{u}_g\widehat{u}^\prime_g)X_g=\sum_{g\in\mathcal{G}}(X^\prime_g\widehat{u}_g)(\widehat{u}^\prime_gX_g)$.  Matrix $X_g^\prime$ is an $K\times N_g$ matrix, while $\widehat{u}_g$ is an $N_g\times 1$ vector of regression residuals.  Thus, the matrix $X^\prime_g\widehat{u}_g$ is a $K\times 1$ vector, while its transpose $\widehat{u}^\prime_gX_g$ is $1\times K$.  Additionally, note that $X^\prime_g\widehat{u}_g$ is generated by multiplying the observations \emph{of each observation} with its own residual, and so can be generated cumulatively. Thus, as previously, if data is arbitrarily divided into $J$ blocks, the quantity $X^\prime_g\widehat{u}_g=X^{1\prime}_g\widehat{u}^1_g+\ldots + X^{J\prime}_g\widehat{u}^J_g$ can be generated cumulatively by first calculating  $X^{j\prime}_g\widehat{u}^j_g$ for each group present within each block $j$, then summing over all $j$, and finally using this quantity to calculate the overall quantity $\Omega_g$.\footnote{It is important to note that this procedure requires working with the $K\times 1$ vector $X^{j\prime}_g\widehat{u}^j_g$ at each step.  It is \emph{not} possible to calculate $\Omega_g$ at each step, but rather, we must accumulate $X^{j\prime}_g\widehat{u}^j_g$ and only then calculate $\Omega_g$.} This procedure is laid out formally in Algorithm \ref{alg:cluster} below, where as before, $X^{\prime}_g\widehat{u}_{g,1\sim j}$ refers to the summation of  $X^{1\prime}_g\widehat{u}^1_g+\cdots+X^{j\prime}_g\widehat{u}^j_g$.

%While this quantity can be calculated cumulatively \emph{across} clusters, it can not be calculated cumulatively within a cluster, requiring .  

\begin{algorithm}[ht!]
\caption{Cumulative Estimation of Cluster-Robust Variance-Covariance Matrix}
\label{alg:cluster}
Inputs: Database consisting of ($X, G$), block size $b$. Point estimate $\widehat\beta_{OLS}$, and $\Sigma^{g}_{1\sim J}$ from Algorithm \ref{alg:GCLS}.\\
Result: Variance-covariance matrix $\widehat{V}(\widehat\beta_{OLS})_{C}$.  \\ \ \\
1. Set $i=1$ and $j=b$. Load into memory partition of data covering $G,X$ in observations $i$ to $j$ \;
\For{$g\in\mathcal{G}^{1}$}
{
2. Calculate $X^{1\prime}_g\widehat{u}^1_g$ \;
}
3. If observations $i$ to $j$ contain end of file, set $e=1$, otherwise, set $e=0$ \;
\While{$e\neq1$}
{
4. Replace $i=i+b$ and $j=j+b$. Load into memory partition of data covering $X,G$ in observations $i$ to $j$. \;
\For{$g\in\mathcal{G}^{j}$}
{5. Calculate $X^{j\prime}_g\widehat{u}^j_g$. \;
6. If $\exists\ X^\prime_g\widehat{u}_{g,1\sim j-1}$ calculate $X^\prime_g\widehat{u}_{g,1\sim j}=X^\prime_g\widehat{u}_{g,1\sim j-1}+X^{j\prime}_g\widehat{u}^j_g$, otherwise initialise $X^\prime_g\widehat{u}_{g,1\sim j}=X^{j\prime}_g\widehat{u}^j_g$. \;
}
7. If observations $i$ to $j$ contain end of file, set $e=1$. \;
}\textbf{end}\\ \ \\
8. Calculate
\[
\widehat{V}(\widehat\beta_{OLS})_c = \frac{N-1}{N-K}\frac{N_G}{N_G-1}(\Sigma_{1\sim J})^{-1}\left[\sum_{g\in\mathcal{G}}(X^\prime_g\widehat{u}_{g,1\sim J})(X^\prime_g\widehat{u}_{g,1\sim j})^\prime\right](\Sigma_{1\sim J})^{-1}
\].

\end{algorithm}

%Here we now have a slight issue.  For each cluster, we require the variance covariance matrix of observations for cluster $g$: $\Omega_g \equiv \widehat{u}_g\widehat{u}^\prime_g$.  If $N_G$ refers to the number of observations in each cluster, this will be a matrix of size $N_G\times N_G$.  This is used to generate the inner matrix, which is still $K\times K$, given that it is formed from $X^\prime_g$ (K$\times$N$_g$) $\Omega_g$ (N$_g\times$N$_g$), and $X_g$ (N$_g\times$K).  In practice, what this implies is that we should:
%\begin{enumerate}
%    \item For each cluster, accumulate $\widehat{u}_g$.  We can then generate the $N_g \times N_g$ matrix $\Omega_g$
%\end{enumerate}

%XXX HERE IGNORE THE BELOW!!!!
%A particularly efficient case of this can be implemented in cases where additionally cluster fixed effects are included.  To see why, note that in the case of a fixed effect estimator, the cluster robust variance is:
%\[
%\widehat{V}^{FE}_c = \frac{N-1}{N-K}\frac{G}{G-1}(\ddot{X}^\prime \ddot{X})^{-1}\left[\sum_{g=1}^G\ddot{X}^\prime_g(\widehat{u}_g\widehat{u}^\prime_g)\ddot{X}_g\right](\ddot{X}^\prime \ddot{X})^{-1}
%\]

\subsubsection{An Efficient Bootstrap Algorithm for Clustering}
\label{sscn:bootstrap}
While the clustered procedure described in the previous sub-section is feasible and permits for the exact calculation of analytic cluster-robust variance covariance matrices, it requires opening the data two times: the first to calculate the parameter estimates, and the second to calculate the standard errors which requires residuals $\widehat{u}_g$.  However, given the results from Section \ref{sscn:groups}, if one wishes to generate a clustered standard error by bootstrapping, this can be done in a single pass through the data, and additionally bootstrap replicates can be conducted extremely quickly, and indeed orders of magnitude more quickly than in standard clustered bootstrap procedures.  To see why, note that the parameter estimate of interest can be generated as in \eqref{eqn:grouped}.  Also note that from Algorithm \ref{alg:GCLS}, that cumulative procedures are used to generate $K\times K$ matrices for each group $\Sigma_{1\sim J}^{g}$, as well as group-specific $K\times 1$ vectors $\Upsilon_{1\sim J}^{g}$.

This implies that we can generate resampled versions of \eqref{eqn:grouped} by simply resampling with replacement $N_G$ pairs of matrices $\Sigma_{1\sim J}^{g},\Upsilon_{1\sim J}^{g}$, and calculating a resampled estimator $\widehat{\beta}^{b*}$ as follows:
\begin{equation}
\label{eqn:bootstrap}
\widehat\beta^{*} = \left(\sum_{g^*\in\mathcal{G}^*}\Sigma_{1\sim J}^{g*}\right)^{-1}\left(\sum_{g^{*}\in\mathcal{G}^{*}}\Upsilon_{1\sim J}^{g*}\right),
\end{equation}
where $\Sigma_{1\sim J}^{g*}$ refers to resampled matrix $\Sigma_{1\sim J}^{g}$, and similarly for $\Upsilon_{1\sim J}^{g}$.  When clusters are large, such as individuals within states of countries, resampling aggregated matrices to form bootstrap resamples $\widehat\beta^{*}$ will be orders of magnitudes faster than resampling clusters of data. This suggests a potentially substantially faster bootstrap estimate for the cluster robust variance for the parameter vector $\widehat\beta$.  This consists of generating a large number $B$ of resampled estimates  
\eqref{eqn:bootstrap}, which can be used to calculate the bootstrap CRVE for $\widehat\beta$ as: $V(\widehat\beta)_{CRVE}=V(\widehat\beta^{*})=\frac{1}{B}\sum_{b=1}^B(\widehat\beta_b^{*}-E[\widehat\beta_b^{*}])^2$.

%\subsubsection{An Efficient Bootstrap Algorithm for Regularised Regression Models}

%NEED TO THINK ABOUT HOW TO TURN BOOTSTRAP AROUND...  We want to do this in a stream, where for observation 1 we immediately sort it into each of $B$ matrices some amount of times (selected with replacement).  But in practice bootstrap does N choose k for each observation, not for each value... 
%\subsection{A Sequential Procedure}
%APPENDIX IF SHOOTING FOR 6K WORDS...

\section{Optimal Implementation}
\label{scn:optimal}
%WHEN DISCUSSION OPTIMAL BLOCK SIZE, NOTE THAT THE NUMBER OF CALCULATIONS IN BOTH FOR XX and Xy ARE IDENTICAL.  THE DIFFERENCE IS IN MEMORY SLOW-DOWN FOR STANDARD ALGORITHM, VERSUS SUMMING MULTIPLE XX TOGETHER (DIFFERENTIAL COST FOR CLS).
Whether implementing cumulative or standard algorithms, identical calculations are required to be made, given that cross products are required between each element of $X$ and between $X$ and $y$ for each observation $i$.  Indeed, cumulative algorithms require \emph{strictly more} calculations than standard algorithms.  To see this, consider the case of OLS.  To calculate coefficients in OLS, $X^\prime$ must be multiplied with $X$, implying computational time of order $\mathcal{O}(NK^2)$.  Additionally, $X^\prime$ must be multiplied with $y$, implying computational time of order $\mathcal{O}(NK)$. Finally, resolving the linear system $X'X \beta = X'y$ involves time $\mathcal{O}(K^3)$ via Gauss-Jordan elimination.  In the case of cumulative algorithms, identical procedures are required, and additionally, at each step two $K\times K$ matrices $\Sigma_j$ and $\Sigma_{1\sim j-1}$ must be summed, which is of computational time $\mathcal{O}(K^2)$, and similarly, two $K\times 1$ vectors $\Sigma_j$ and $\Sigma_{1\sim j-1}$ must be summed, involving time $\mathcal{O}(K)$.  In general, $N>>K$, implying that $\mathcal{O}(NK^2)$ will dominate in both cases.  

Nevertheless, if all computational procedures scale linearly in the number of observations, no gains will be made by implementing cumulative routines in place of their standard counterparts.  However, computational routines clearly do not scale linearly with sample size indefinitely.  To see this, it is sufficient to consider two cases: one where $N$ is such that observations can be housed in a computer's working memory, and another where $N$ exceeds the capacity of a computer's memory.  In the prior case, the calculation time will be finite, while in the latter case calculation will be impossible, and hence time will be infinite.  In this section we will discuss the optimal implementation where optimality refers to the block size which minimises calculation time.  Given that in the limit cumulative procedures simply revert to standard OLS estimation if a block size of $N$ is chosen, we consider only the optimal choice of block size for cumulative procedures. We return to these issues empirically in Section \ref{scn:illustrations}.

As above, the entire cumulative algorithm for OLS requires a number of well-defined steps. In total, matrix multiplication between $X'$ and $X$ is $\mathcal{O}(NK^2)$, and between $X'$ and $Y$ is $\mathcal{O}(NK)$.  Final resolution of the parameters is $\mathcal{O}(K^3)$.  Additionally, within each block $j$ a series of element-by-element summations must occur to accumulate $\Sigma_{1\sim j}$ and $\Upsilon_{1\sim j}$. In each step these are are of order $\mathcal{O}(K^2)$ and $\mathcal{O}(K)$ respectively.  Given that there are $J$ such blocks, and in the first block it is not necessary to accumulate $\Sigma_{1\sim 1}$ and $\Upsilon_{1\sim 1}$, these calculations are of computational time $\mathcal{O}(K^2(J-1))$ and $\mathcal{O}(K(J-1))$.  Thus, total computational time of the algorithm is of the order:  
\begin{equation}
\label{eqn:complexity}
\mathcal{O}(NK^2)+\mathcal{O}(NK)+\mathcal{O}(K^3)+\mathcal{O}(K^2(J-1))+\mathcal{O}(K(J-1)).
\end{equation}
Here it is clear that if a single block is chosen, and hence $J=1$, then $\mathcal{O}(K^2(J-1))+\mathcal{O}(K(J-1))=0$ and the cumulative algorithm collapses to OLS.

To consider the optimal block size, we will consider separately three elements of \eqref{eqn:complexity}.  A first element, corresponding to the first two terms in \eqref{eqn:complexity} and denoted $L(N,K)$ is the procedure of loading data and multiplying matrices required to arrive to $X'X$ and $X'y$.  We write this function as $L(N,K)=l(NK^2+NK)$.  A second element, corresponding to the third term in \eqref{eqn:complexity}, consists of generating estimates $\widehat\beta$ once provided with $X'X$ and $X'y$, and is written as $S(K)=s(K^3)$.  And finally, an accumulation procedure, denoted $C(J,K)=c(K^2(J-1)+K(J-1))$, consisting of the final two terms where matrices are summed in a cumulative fashion.

Note that given that $N=JN_j$, the first term can be re-expressed as s $L(J,K)=l(JN_jK^2+JN_jK)$.  For the sake of simplicity, given that $N_j$ is determined by $J$, below we omit $N_j$ terms as implicit in $l(\cdot)$.  For a given $K$, The total time to compute the cumulative least squares algorithm can thus be written as:
\[
T_c(J;K) = L(J,K)+S(K)+C(J,K).
\]
Hence, the optimal number of partitions of data $J$ should solve the problem:
\[
\underset{J}{\min} \bigg(L(J,K) + S(K) + C(J,K)\bigg) \text{ subject to }0 < J \leq N.
\]
For an interior solution, this suggests that optimal number of blocks considered should satisfy the following first order condition:
\begin{align}
&\frac{\partial L(J,K)}{\partial J}+\frac{\partial C(J,K)}{\partial J}=0 \nonumber \\
\Rightarrow &\frac{\partial L(J,K)}{\partial J} =-\frac{\partial C(J,K)}{\partial J} \label{eqn:optimal}
\end{align}
Note that here given that regardless of the block size chosen, the same final matrix inversion is required, for a given $K$ optimality does not depend on $S(\cdot)$, as reflected in \eqref{eqn:optimal}.
This suggests the logical conclusion that an optimal block size should be chosen which equates the marginal cost of summing an additional set of matrices across blocks, $\partial C(J,K)/\partial J$, with the marginal benefit coming from loading smaller partitions of the data into memory to calculate $X'X$ and $X'Y$.  

Understanding the optimal block size for conducting cumulative least squares thus requires understanding the nature of functions $C(J,K)$ and $L(J,K)$.  The precise nature of these two functions is likely highly dependent upon a particular computational environment (both software and hardware), nevertheless, we can suggest a number of key conjectures.  Firstly, it is clear that for a given $K$, $C(J,K)$ will, abstracting from other elements, be linear in $J$.  To see this, note that for each additional block, we simply require the summation of an additional identically sized $K\times K$ and  $K\times 1$ matrix.  Thus, moving from $j$ to $j+1$ requires adding one set of summations, while moving from $j+1$ to $j+2$ requires adding an identical set of summations, and so calculation time will scale linearly in block sizes.  Secondly, for a given $K$, $L(J,K)$ seems unlikely to be linear in $J$.  Rather, this value is highly dependent on a particular computational environment.  Note that in general, when a computer's RAM usage is high a number of internal processes such as paging occur such that loading data into memory becomes increasingly slow as the size of a database increases.  Thus, when a sample approaches the limit of a computer's RAM, the marginal benefit of increasing the number of blocks of data is high given that it avoids substantial slowdowns inherent in computational architecture.  However, if a computer's RAM usage is low, the marginal benefit of increasing the number of blocks approaches zero, given that no such slowdown in data loading occurs, and the total computation time $L(J,K)$ is independent of $J$.  Thus, at very high values of $J$, for example where $J$ approaches the total number of observations, the marginal benefit of increasing $L(J,K)$ is likely essentially zero given that no memory slowdown occurs owing to the storage of large amounts of data in memory. However, at high low values of $J$, if data is large enough to result in memory slowdowns, the benefit of increasing $J$ is substantial.  On the other hand, the costs of increasing $J$, $C(J,K)$ are constant in $J$.

% HERE I THINK WE COULD BUILD IN THE BIT ABOUT PUTTING FUNCTIONAL FORMS IF BEJAMIN IS ABLE TO BRING THIS IN.  FOR NOW I'VE JUST WRITTEN MORE GENERICALLY.

This suggests a number of general results. Firstly, if one is working with large datasets and memory is not unlimited, it is likely the case that smaller blocks of data should be preferred given that memory slowdowns can be avoided.  If data does not fit in memory, this argument holds with certainty, given that $\partial L(J,K)/\partial J|_{J_{min}}=\infty$, where $J_{min}$ refers to the point at which it becomes feasible to hold data in memory.  However, if RAM limits are binding with $N$, the optimal solution is likely not to increase the number of blocks to the maximum theoretical limit ($J=N$), given that at low block sizes no memory slowdown will be observed, but a constant cost increase is observed in terms of sums across blocks.  What's more, these results suggest that there is no gain from varying the block size across the sample, but rather that a single value of $N_J$ should be chosen as that which satisfies \eqref{eqn:optimal}. Finally, as the number of covariates increases, it seems likely that fewer blocks should be preferred, given that the cost of adding marginal blocks increases in $K$.  Precise optima will vary across computers and configurations, and are thus specific to particular contexts.  In the following section we will document specific examples which point to a Goldilocks principle of choosing blocks neither too big nor too small, and, fortunately, suggest that computation time is quite flat over a large range of blocks, provided that extreme situations are not encountered.

%A combination of the two approaches. Perform the cumulative algorithm for a number of partitions and the run the sequential algorithm. To be explored.
%
%*****************************************************NOTES*************************************************************
%PENDING 1: For the cumulative algorithm, we need to consider the slowdown of keeping the $\Sigma$ and $\Upsilon$ matrices in RAM while performing operations OR alternatively think about saving it and loading it again at every iteration. Either way, these considerations are going to slowdown the cumulative algorithm.
%
%PENDING 2: For both algorithms, we need to account for the RAM storage or the saving/loading activity of additional information needed to compute the standard error of the model $\hat{\sigma}^2$, as thought by Damian before.
%************************************************************************************************************

\section{Illustrations}
\label{scn:illustrations}
In this section we document two examples to illustrate the performance of cumulative procedures in practice. A first example is based on simulated data where we maintain fixed computational resources and vary key parameters of the data (namely the number of observations and the number of variables).  And a second example is based on real data, where we document the performance of cumulative versus standard estimation procedures in a range of computational environments and with various methods of estimation.

\subsection{Simulated Data}
To demonstrate the relative performance of the cumulative algorithm compared with a standard regression implementation, we test the time to complete calculations under controlled conditions.  Specifically, we compare the time it takes to run an OLS regression using cumulative and standard estimation routines based on the same data.  We conduct these tests on a server with 1GB of dedicated RAM and no outside processes running to ensure comparability across estimation times.\footnote{This is a commercially available Virtual Private Server with a 4 core CPU running a Linux-based operating system. All data is stored on the server on a solid state drive.}  We consider a range of observation numbers and independent variables, and, in the case of the cumulative algorithm, also document times under a range of block sizes.  In each case, the time completed consists of identical procedures: namely, in the case of the cumulative algorithm it is the time to import all blocks of data, calculate the necessary block-specific quantities, and finally return the regression estimates, standard errors and R-squared.  And in the case of a `standard' regression implementation, the time simply refers to the time to open the data from the disk and estimate the OLS regression using canned software. %It is important that calculation times include the time to read data from the hard disk into virtual memory, as this is where the cumulative algorithm generates blocks of data, and where it makes feasible estimation of models with data which greatly exceed the available RAM of the computer.

The test procedure thus consists of generation of data of the following general form:
\[
y = X\beta + u,
\]
where $X$ is an $N\times K$ matrix of simulated data consisting of a constant and $K-1$ uniformly distributed variables, $u\sim\mathcal{N}(0,3)$ is a simulated $N\times1$ error term, and $\beta$ is a $K\times 1$ vector of parameters.  Here we consider processing times varying $K$, $N$ and the block size, $N_j$, where in each case $X$ and $y$ are treated as inputs, $u$ as unobservable, and $\beta$ as a vector of parameters to estimate.

Tests are conducted using a recent version of Stata (specifically, Stata version 16), where the cumulative algorithm is written principally in Stata's matrix language Mata.  Regression is conducted using Stata's native ``regress'' command.  Initially, a single core version of Stata is used (Stata SE), however relative performance is shown to follow qualitatively similar patterns when a multiple processor version of Stata is used (Stata MP).  In Section \ref{sscn:empirics} we consider a range of alternative estimation procedures and models. %The controlled server used to complete these tests stores all data on a solid state disk. 

%1 GB RAM.  Similar in MP or SE.  Discuss details of server, processing card, etc.
% \begin{figure}[htpb!]
% \caption{\textbf{Sample size and execution time of commercial routine versus cumulative method}}
% \subfloat[5 Independent Variables]{\label{fig:sampleSize:5ind}{\includegraphics[width=0.50\linewidth]{./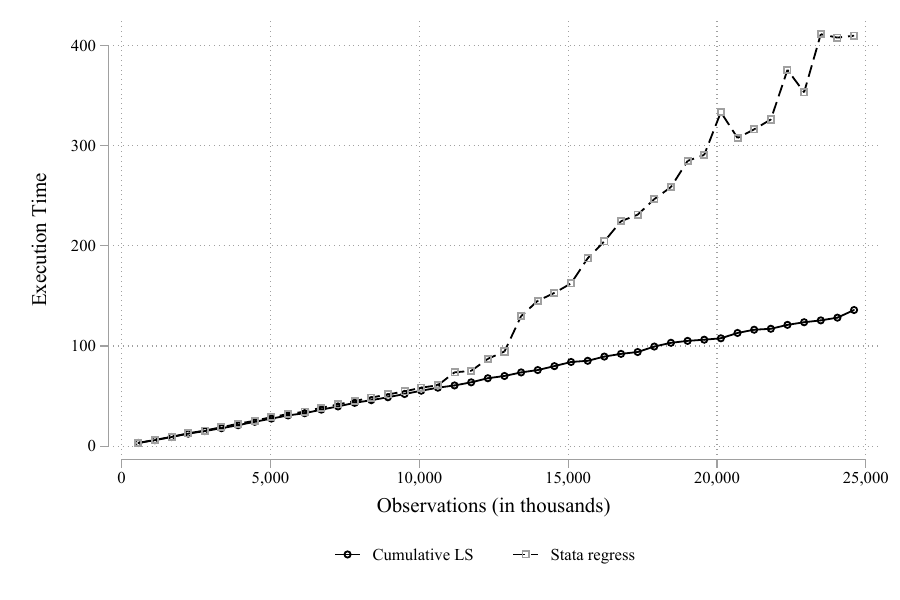}}}%
% \subfloat[10 Independent Variables]{\label{fig:sampleSize:10ind}{\includegraphics[width=0.50\linewidth]{./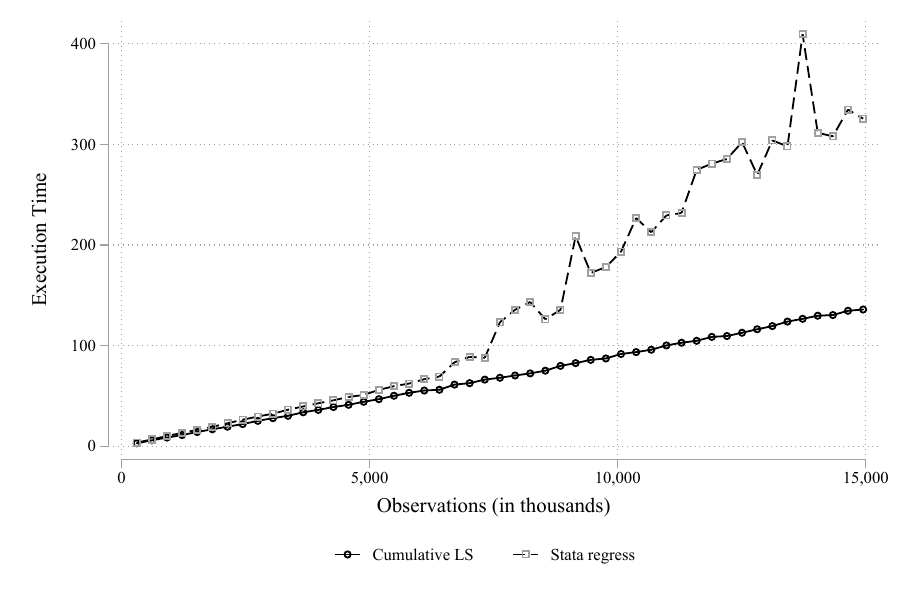}}}\\
% \subfloat[20 Independent Variables]{\label{fig:sampleSize:20ind}{\includegraphics[width=0.5\linewidth]{./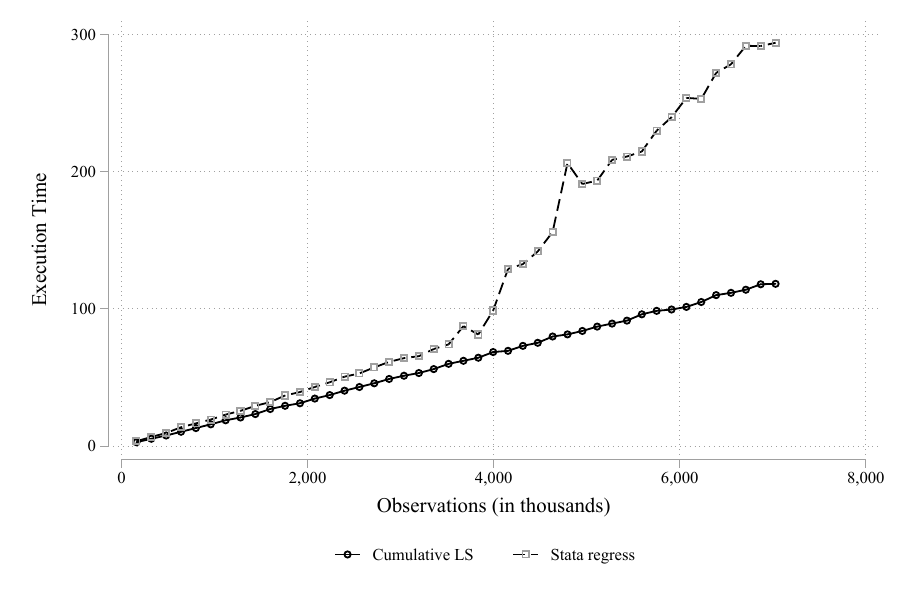}}}%
% \subfloat[50 Independent Variables]{\label{fig:sampleSize:50ind}{\includegraphics[width=0.5\linewidth]{./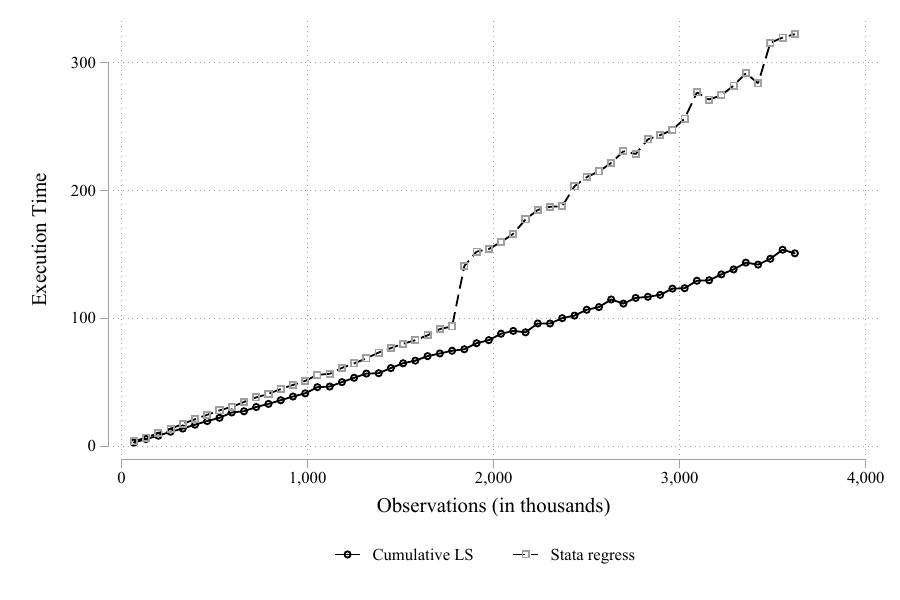}}}
% \label{fig:sampleSize}

%   \floatfoot{\textbf{Notes}: All times refer to the computation time of reading data into memory and estimating an ordinary least squares regression. Tests are all conducted on a system with 1GB of RAM, with no other processes running.  In each case, tests are conducted up to the point at which there is insufficient RAM to open the data, this precluding the estimation of standard regression models. Beyond this point, it is still feasible to estimate parameters using Cumulative Least Squares.}
% \end{figure}

\begin{figure}[ht!]


  \caption{\small Sample size and execution time of commercial routine versus cumulative method}
  \label{fig:sampleSize}
  \begin{subfigure}{0.49\textwidth}
  \includegraphics[scale=0.52]{./figures/oneTest_5.pdf}
  \caption{5 Independent Variables}
  \end{subfigure}
  \begin{subfigure}{0.49\textwidth}
  \includegraphics[scale=0.52]{./figures/oneTest_10.pdf}
  \caption{10 Independent Variables}
\end{subfigure}

  \begin{subfigure}{0.49\textwidth}
  \includegraphics[scale=0.52]{./figures/oneTest_20.pdf}
  \caption{20 Independent Variables}
  \end{subfigure}
  \begin{subfigure}{0.49\textwidth}
  \includegraphics[scale=0.52]{./figures/oneTest_50.pdf}
  \caption{50 Independent Variables}
  \end{subfigure}
  \floatfoot{\textbf{Notes}: All times refer to the computation time of reading data into memory and estimating an ordinary least squares regression. Tests are all conducted on a system with 1GB of RAM, with no other processes running.  In each case, tests are conducted up to the point at which there is insufficient RAM to open the data, this precluding the estimation of standard regression models. Beyond this point, it is still feasible to estimate parameters using Cumulative Least Squares.}
\end{figure}

Processing times for estimation of cumulative algorithms versus standard regression software are documented in Figure \ref{fig:sampleSize}.  Each panel presents processing times for a particular number of simulated independent variables ranging from 5 (panel (a)), to 50 (panel (d)).  Processing time in seconds is documented on the vertical axis of each plot, and the total number of observations in thousands is documented on the horizontal access.  Times for standard regression software are presented as hollow squares with dashed lines, while times for cumulative algorithms are presented as hollow circles connected by a solid black line.  Each point refers to a specific simulated dataset and the time it takes to estimate parameters, standard errors, and other regression statistics with this data. In this Figure, in each case where cumulative algorithms are used the block size is arbitrarily chosen to contain 10\% of the total number of observations.

Across all panels we observe, unsurprisingly, that as the total number of observations grow for a fixed $K$, processing time increases. For cumulative algorithms, this processing time increases approximately linearly.  For example, for the case where $K$=5, regressions with 5, 10, 15 and 20 million observations take approximately 30, 60, 90 and 120 seconds to run.  This is observed in all panels.  Similar linear behaviour is observed in standard regression software when the observation numbers are moderate compared to the total RAM available.  However, the linear relationship breaks down and processing times become considerably slower from around the time that the total number of observations approaches around 50\% of the memory capacity of the computer.\footnote{In principle, a computer with 1GB of RAM contains 1$\times10^{9}\times\frac{1024^3}{1000^3}$ bytes of memory. A given line of data in our simulations consists of $K$ double precision variables which each occupy 8 bytes of memory ($K-1$ independent variables and the dependent variable).  Thus, one can calculate the theoretical maximum number of observations which could be held in memory as $\frac{1\times10^{9}}{K\times8}\times\frac{1024^3}{1000^3}$. In Figure \ref{fig:sampleSize} we observe that the kink in processing times for Stata's regress command occurs at around 12,000,000 observations, or around 44\% of the computer's theoretical maximum observations.} This implies that at relatively small numbers of observations compared to a computer's available memory, the processing time of cumulative procedures are similar to that of standard non-cumulative procedures, however cumulative procedures then rapidly become 2 to 3 times fasted than non-cumulative counterparts.  At some point, when the number of observations grows beyond the capacity of the RAM, non-cumulative procedures become infeasible to estimate, while the processing time of cumulative procedures continue to scale linearly indefinitely.  If similar tests are run using multiple processor versions of software, similar patterns are observed (Appendix Figure \ref{tab:processingTimesMP}).

\begin{figure}[ht]
  \centering
  \caption{Execution Time of Cumulative Least Squares by Block Size}
  \label{fig:block2d}
  \begin{subfigure}{0.49\textwidth}
  \includegraphics[scale=0.52]{./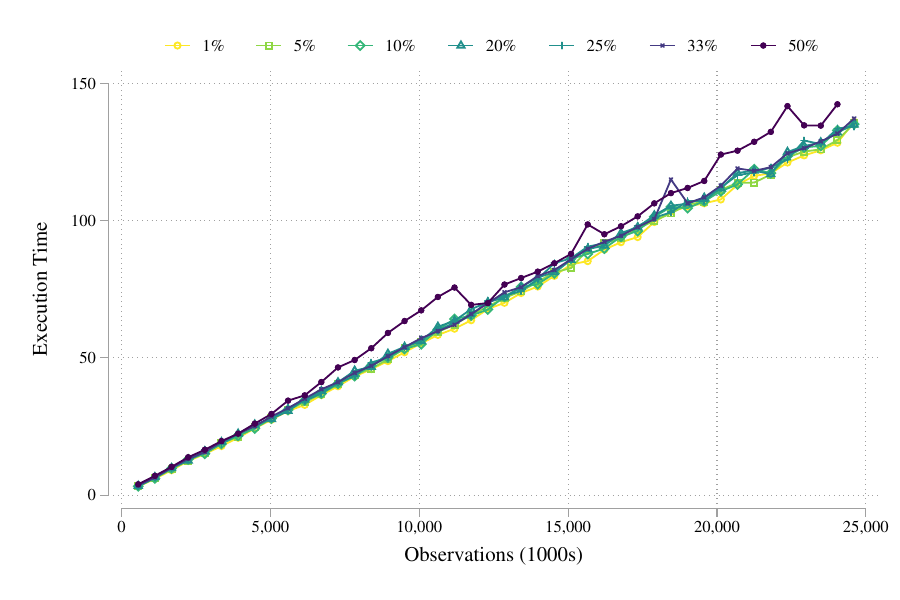}
  \caption{5 Independent Variables}
  \end{subfigure}
  \begin{subfigure}{0.49\textwidth}
  \includegraphics[scale=0.52]{./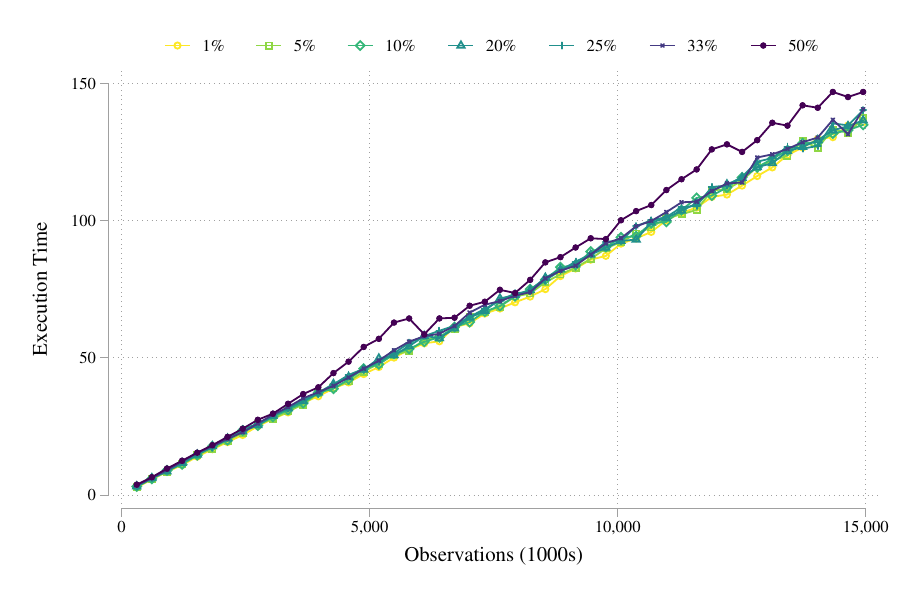}
  \caption{10 Independent Variables}
  \end{subfigure}

    \begin{subfigure}{0.49\textwidth}
  \includegraphics[scale=0.52]{./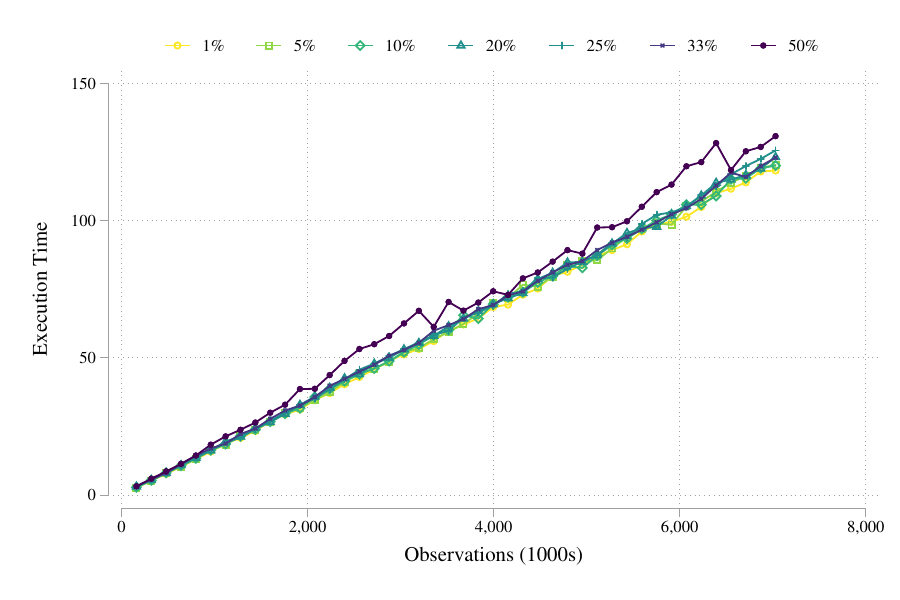}
  \caption{20 Independent Variables}
  \end{subfigure}
  \begin{subfigure}{0.49\textwidth}
  \includegraphics[scale=0.52]{./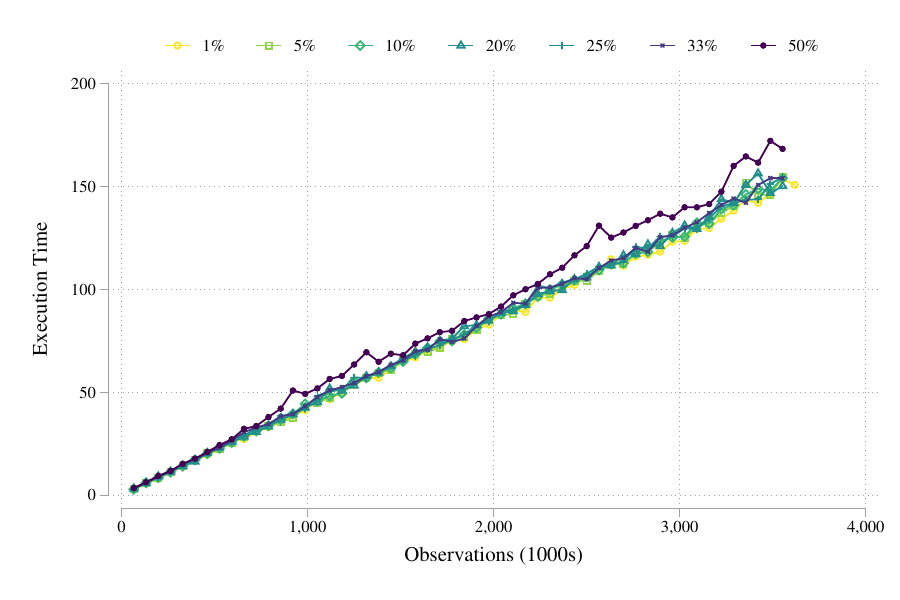}
  \caption{50 Independent Variables}
  \end{subfigure}
  \floatfoot{\textbf{Notes}: All times refer to the computation time of reading data into memory and estimating a cumulative least squares regression. Tests are all conducted on a system with 1GB of RAM, with no other processes running.  Block sizes as a proportion of the total observation numbers are indicated in the figure legend.}
\end{figure}

Results from Figure \ref{fig:sampleSize} are based on a block size $N_j$ which is arbitrarily chosen as $N_j=N/10$.  If data is very large, blocks of this size will also imply that individual blocks of data cannot fit in memory.  In Figure \ref{fig:block2d} we document processing times of cumulative least squares procedures where the block size is varied from 1\% of data up to 50\% of data (using sizes of greater than 50\% of data is not sensible, as one block will be larger than the other).\footnote{These are essentially profiles of a surface where the block size is varied continuously. The entire surface is plotted in Appendix Figure \ref{fig:block3d}.}  Once again, we document times across a range of values for $K$ (panels), and $N$ (horizontal axes). Each point refers to the time for a single regression.   We observe that across all cases examined, in general smaller block sizes are marginally faster.

\begin{figure}[ht]
  \centering
  \caption{\small Relative Time of Cumulative Least Squares Versus Standard Implementation by Block Size}
  \label{fig:block2db}
  \begin{subfigure}{0.49\textwidth}
  \includegraphics[scale=0.52]{./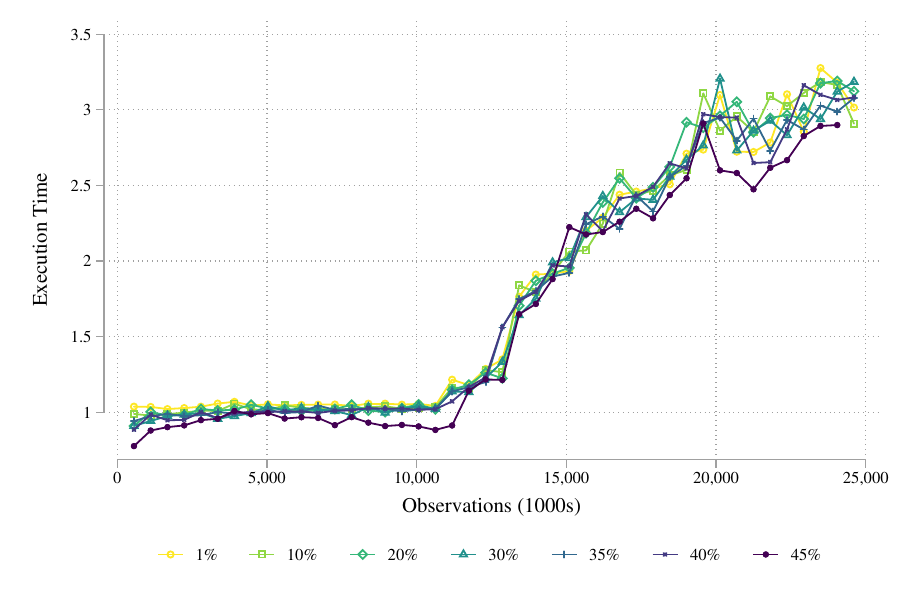}
  \caption{5 Independent Variables}
  \end{subfigure}
  \begin{subfigure}{0.49\textwidth}
  \includegraphics[scale=0.52]{./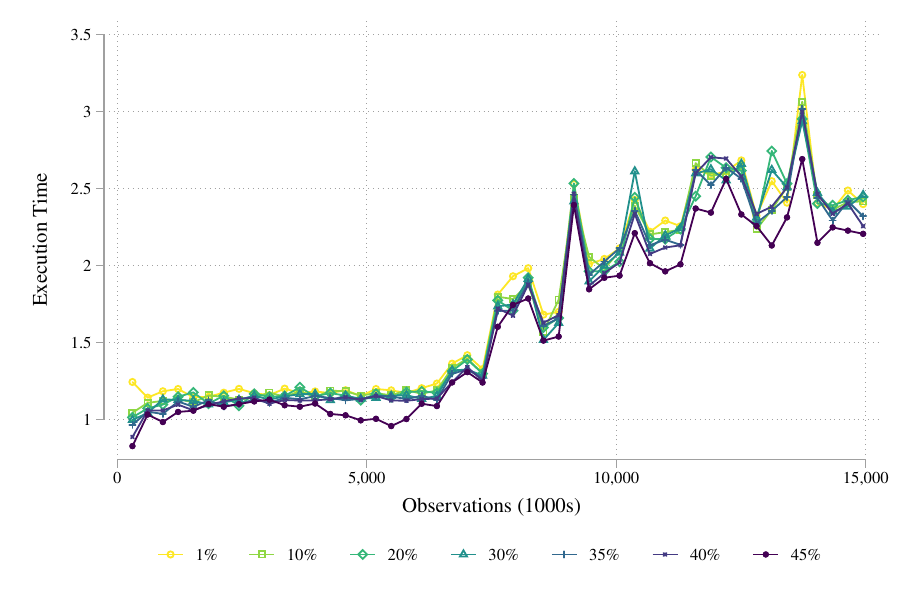}
  \caption{10 Independent Variables}
  \end{subfigure}

    \begin{subfigure}{0.49\textwidth}
  \includegraphics[scale=0.52]{./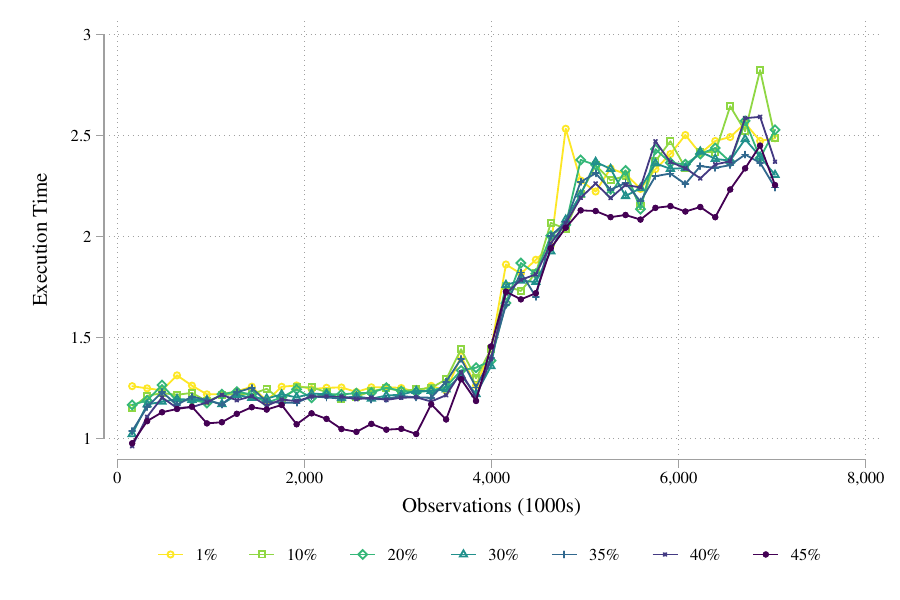}
  \caption{20 Independent Variables}
  \end{subfigure}
  \begin{subfigure}{0.49\textwidth}
  \includegraphics[scale=0.52]{./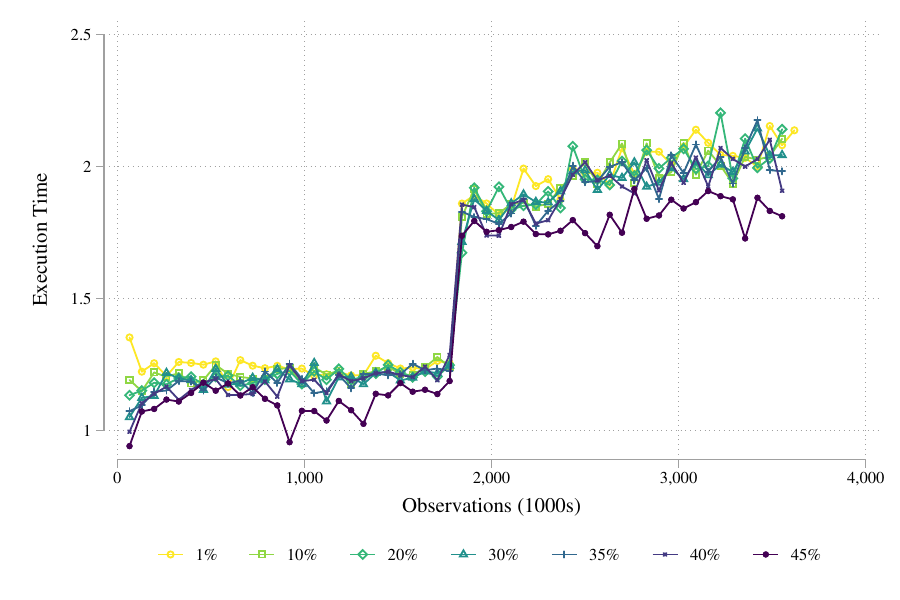}
  \caption{50 Independent Variables}
  \end{subfigure}
  \floatfoot{\textbf{Notes}: Each point presents the ratio of computation times of Stata's native regression command to cumulative least squares.  Values less than 1 imply non-cumulative procedures are faster than cumulative procedures, and vice versa for values greater than 1. All times refer to the computation time of reading data into memory and estimating a cumulative least squares regression. Tests are all conducted on a system with 1GB of RAM, with no other processes running.  Block sizes as a proportion of the total observation numbers are indicated in the figure legend.}
\end{figure}

Figure \ref{fig:block2db} documents the ratio of computation times from Stata's native regress command compared to cumulative least squares, where cumulative least squares is implemented with the same range of block sizes displayed in Figure \ref{fig:block2d}.  Values of less than 1 imply that standard (non-cumulative) estimation procedures are faster than cumulative procedures, while values greater than 1 imply that cumulative procedures are faster than non-cumulative procedures.  In line with the substantial increase in non-cumulative procedures documented in Figure \ref{fig:sampleSize}, we observe a sharp improvement in the ratio at around 50\% of the theoretical maximum memory.  In this particular implementation, when the smallest block size is used (1\% of $N$), the ratio is consistently greater than 1.

These results may suggest that the optimal procedure is thus to choose a block size as small as possible.  All results in this paper hold for block sizes as small as $N_j=1$, and even in cases where $N_j=N/100$, the block size considered exceeds 1.  In Figure \ref{fig:blockOpt} we consider execution times for a particular simulated dataset ($N$=25,000,000, $K=5$), however here allowing block sizes to fall to their smallest possible value.  A logarithmic scale is used on the horizontal scale allowing black sizes to vary from 1 to 12.5 million observations (50\% of the total observations).  Panel (a) uses an identical 1GB server as that used in tests above, while panel (b) documents the same times on a computer where memory limits do not bind.  In this case we observe that the optimal block size \emph{is not} the smallest possible size ($N=1$), but rather follows the Goldilocks principle laid out in Section \ref{scn:optimal}.  Clearly, block sizes that are so large that memory constraints begin to bind with $N_j$ observations should be avoided, however, a very small block size is also sub-optimal, given that this requires the accumulation of many $\XpXj{j}$ and $\Xpyj{j}$ matrices.  Where memory limits do not bind sharply (panel (b)) one may wish to work with slightly larger block sizes to avoid sub-optimal behaviour observed with very small block sizes, as provided extreme regions are avoided, the practical choice of block appears to be of second order importance.

\begin{figure}
  \centering
  \caption{Optimal Block Size and Available Memory} 
  \label{fig:blockOpt}
  \begin{subfigure}{0.49\textwidth}
  \includegraphics[scale=0.43]{./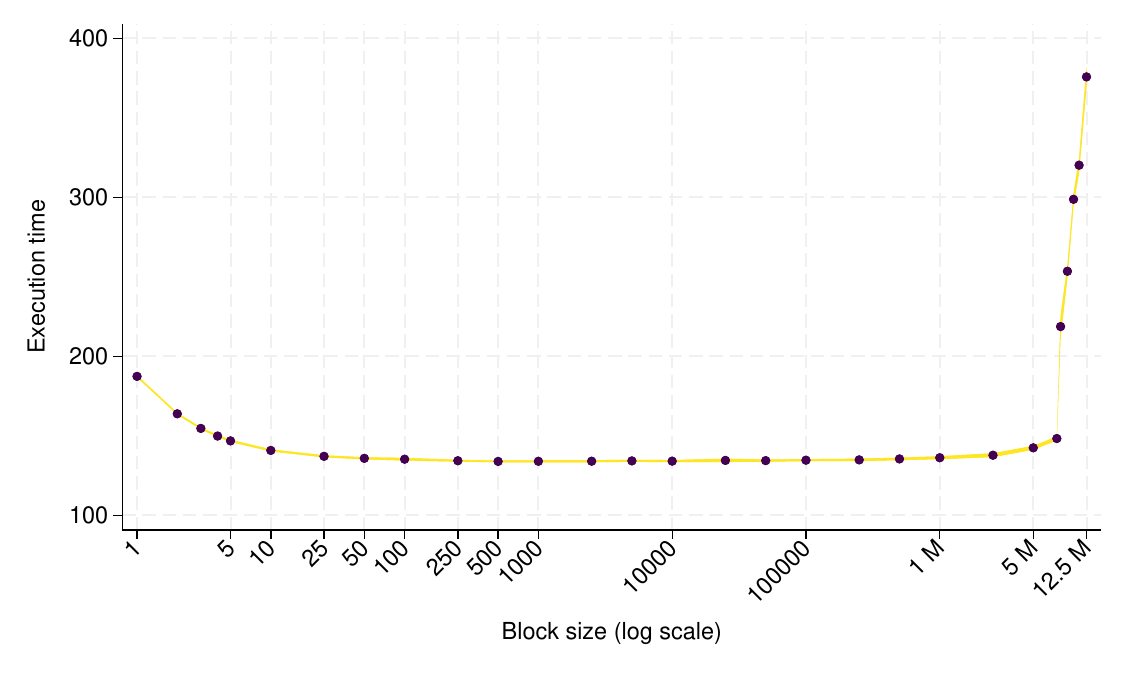}
  \caption{Memory Limits Binding}
  \end{subfigure}
  \begin{subfigure}{0.49\textwidth}
  \includegraphics[scale=0.42]{./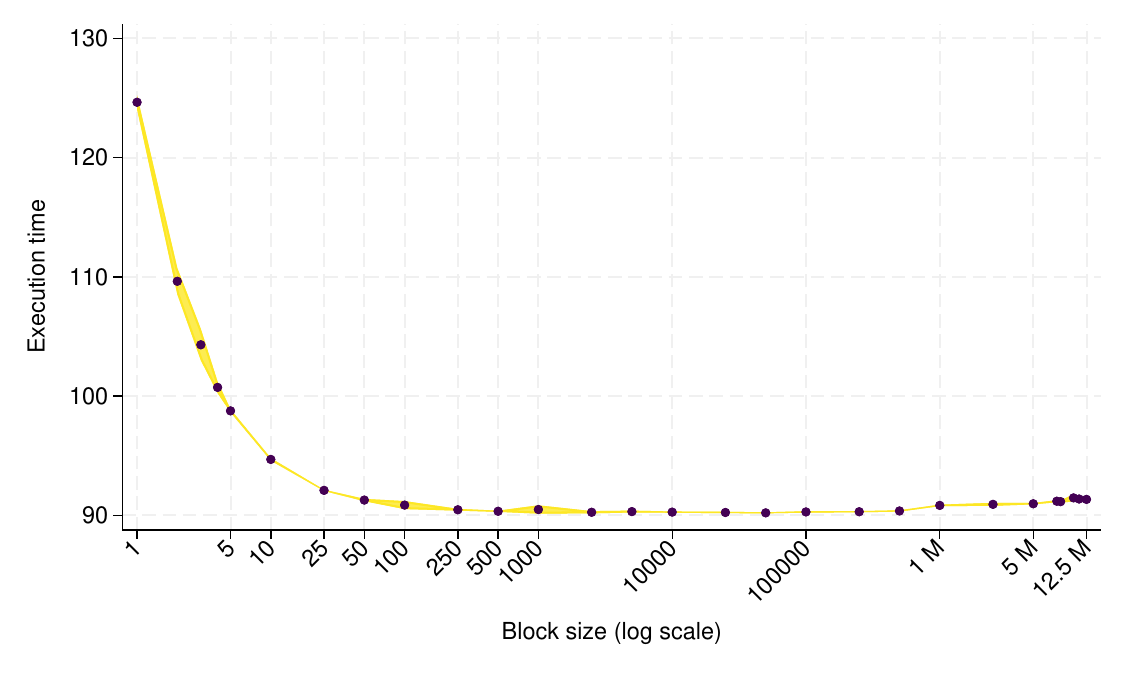}
  \caption{No Binding Memory Limits}
  \end{subfigure}
  \floatfoot{\textbf{Notes}: Total time taken to estimate a regression with 5 independent variables and 25 million observations is documented.  Varying block sizes are used, as plotted on the horizontal axes, and total time is plotted on the vertical axis.  The left-hand panel is estimated on a server with 1GB of RAM, 4 cores, and an Intel i7 processor.  The right-hand panel is estimated on a PC with 32GB of RAM, 8 cores and Intel i7 processors. In each case, estimates are generated 50 times for each block size and average times are plotted as circles.  The 95\% confidence interval of these estimation times are plotted in yellow.}
\end{figure}

%\clearpage
\subsection{An Empirical Example}
\label{sscn:empirics}
We document the performance of cumulative algorithms and their non-cumulative counterparts on a real empirical example.  This empirical example is based on a large sample of microdata, following \shortciteN{Aaronsonetal2020}. \shortciteANP{Aaronsonetal2020}\ estimate the impact of fertility on mother's labour supply using data over 2 centuries from censuses and demographic surveys.  We follow \shortciteN{Aaronsonetal2020} in downloading data from IPUMS and the Demographic and Health Surveys resulting in 51,449,770 observations covering 106 countries, with observations drawn from 434 country by year cells.  Data covers years 1787-2015, and measures women's labour force participation, total fertility, and a number of other mother-level covariates.  In Appendix \ref{scn:data} we provide summary statistics as well as a graph documenting the years covered in data and a graph documenting the countries covered and the number of observations in each (Figures \ref{fig:countries} and \ref{fig:years}).

This example is well-suited to our setting because it allows us to document the relative performance of a number of different estimation and inference procedures.  Specifically, two models are considered, and these are estimated in a number of ways.  A first model is simple (weighted) ordinary least squares, where each woman's labour for participation measure is regressed on her total fertility.  We estimate:
\begin{equation}
\label{eqn:AaronsonOLS}
\text{Participation}_{ict} = \beta_0 + \beta_1 \text{Fertility}_{ict} + X_{ict}^\prime\beta + \phi_{c\times t} + \varepsilon_{ict},
\end{equation}
for individual $i$ in country $c$ observed in year $t$, where country by year fixed effects are indicated as $\phi$, and covariates $X_{ict}$ are those indicated by  \shortciteANP{Aaronsonetal2020}; namely each women's age, age at first birth, and first born child's sex.  Second, we estimate an IV model, where in the first stage, a measure of fertility (specifically whether a woman has a third child) is regressed on an indicator of a woman having second birth twins, and then in a second stage labour force participation is regressed on instrumented fertility:
\begin{eqnarray}
\label{eqn:AaronsonIV}
\text{Fertility 3}_{ict}&=&\pi_0 + \pi_1 \text{Twin 2}_{ict} + X_{ict}^\prime\Pi + \phi_{c\times t}+\nu_{ict} \nonumber \\
\text{Participation}_{ict} &=& \gamma_0 + \gamma_1 \widehat{\text{Fertility 3}}_{ict} + X_{ict}^\prime\Gamma + \phi_{c\times t} + \eta_{ict}.
\end{eqnarray}
All other details follow those laid out in \eqref{eqn:AaronsonOLS}, and replicate models proposed by \shortciteN{Aaronsonetal2020}.\footnote{All results are replicated exactly.}  This IV strategy follows a long tradition, starting with \citeN{RosenzweigWolpin1980}, of seeking to draw conditionally exogenous variation in fertility owing twin births (see \citeN{BhalotraClarke2023} for a recent overview).

In this context, we are interested in documenting the processing times of IV and OLS estimation of coefficients and standard errors with a number of specific estimation procedures.  This includes IV and OLS models where fixed effects are directly estimated as well as estimation by fixed effects estimators where fixed effects are concentrated out resulting in estimates only of coefficients on time-varying variables. We also consider a number of alternative inference procedures; namely, firstly assuming homoscedasticity, then clustering standard errors by country$\times$year, both analytically, and with a clustered bootstrap.  We report the processing time of cumulative algorithms written by us to implement the procedures we lay out above compared to commercially produced (non-cumulative) algorithms written in Stata (version 18). We also consider alternative non-commercial (non-cumulative) algorithms which implement potentially more efficient fixed effect procedures, namely a more rapid implementation of the within transformation described by \citeN{Gaure2013,GuimaraesPortugal10,Correia2016}, implemented by \citeN{Correia2016}. All processing times are measured in minutes and include the time of reading data, estimating the regression and producing output, and are estimated under controlled conditions on a a server with fixed characteristics which are varied across tests.  Each procedure is estimated 10 times, with average processing times reported.

\begin{table}[ht!]
\caption{\small Execution Time Across Models in a Labour Market Participation Example: OLS and IV}
\label{tab:processingTimesSE}
\begin{tabular}{lcccccc}\toprule
& \multicolumn{5}{c}{Memory Constraints} \\ \cmidrule(r){2-6}
& 1 GB & 2 GB & 8 GB & 16 GB & 32 GB \\ \midrule
\multicolumn{6}{l}{\underline{Panel A: OLS}}\\
\multicolumn{6}{l}{\textbf{Cumulative Procedures}}\\
\ \ \ \ Within-transformation&4.060&4.135&4.111&4.236&4.108\\
\ \ \ \ Within, Clustered&7.979&8.053&7.994&8.174&7.929\\
\ \ \ \ Within, Clustered Bootstrap&4.086&4.133&4.151&4.209&4.084\\
\ \ \ \ Clustered Bootstrap (s)&3.684&3.745&3.798&3.815&3.715\\
\ \ \ \ Full Estimation&4.016&4.010&4.046&4.109&3.951\\
\multicolumn{6}{l}{\textbf{Standard Procedures}}\\
Within-transformation (areg)&\cellcolor{lightgray}--&13.677&8.669&8.290&8.497\\
\ \ \ \ Within, Clustered&\cellcolor{lightgray}--&\cellcolor{lightgray}--&12.034&11.029&11.333\\
\ \ \ \ Full Estimation&\cellcolor{lightgray}--&16.474&10.147&10.103&10.054\\
\ \ \ \ Within-transformation (hdfe)&\cellcolor{lightgray} --&\cellcolor{lightgray} --&\cellcolor{lightgray} --&3.605&3.476\\
\ \ \ \ Within, Clustered (hdfe)&\cellcolor{lightgray} --&\cellcolor{lightgray} --&\cellcolor{lightgray} --&3.794&3.554\\\midrule
\multicolumn{6}{l}{\textbf{Panel B: IV}}\\
\multicolumn{6}{l}{\textbf{Cumulative Procedures}}\\
\ \ \ \ Within-transformation&4.600&4.608&4.609&4.698&4.589\\
\ \ \ \ Within, Clustered&8.575&8.653&8.724&8.825&8.636\\
\ \ \ \ Within, Clustered Bootstrap&4.428&4.449&4.429&4.539&4.383\\
\ \ \ \ Winthin, Clustered Bootstrap (s) &4.267&4.293&4.278&4.376&4.271\\
\ \ \ \ Full Estimation &4.362&4.379&4.350&4.448&4.329\\
\multicolumn{6}{l}{\textbf{Standard Procedures}}\\
\ \ \ \ Within-transformation (xtivreg)&\cellcolor{lightgray}--&\cellcolor{lightgray}--&\cellcolor{lightgray}--&\cellcolor{lightgray}--&6.629\\
\ \ \ \ Within, Clustered&\cellcolor{lightgray}--&\cellcolor{lightgray}--&\cellcolor{lightgray}--&\cellcolor{lightgray}--&15.958\\
\ \ \ \ Full Estimation&\cellcolor{lightgray}--&\cellcolor{lightgray}--&\cellcolor{lightgray}--&\cellcolor{lightgray}--&226.181\\
\ \ \ \ Within-transformation (hdfe)&\cellcolor{lightgray} --&\cellcolor{lightgray} --&\cellcolor{lightgray} --&\cellcolor{lightgray} --&5.996\\
\ \ \ \ Within, Clustered (hdfe)&\cellcolor{lightgray} --&\cellcolor{lightgray} --&\cellcolor{lightgray} --&\cellcolor{lightgray} --&8.597\\ \midrule
\textbf{Panel C: System Benchmark}&768.1&764.9&769.9&772.7&778.5\\\bottomrule
\end{tabular}

\vspace*{0.1cm}

\begin{minipage}{13.8cm}
\footnotesize \textbf{Notes}: Each cell reports average processing time in minutes of a particular estimation procedure based on the memory constraints listed in column headers.  All averages are taken over 10 estimations.  Cells are coloured gray when estimation cannot occur due to memory limits.  All methods are estimated in Stata 18 SE.  For clustered bootstrap errors, (s) stands for ``sorted'', implying that the database is sorted by clusters prior to processing. System benchmark is a standard test of processor capacity run on the operating system to provide a baseline comparison of server performance across columns.
\end{minipage}
\end{table}

Results are displayed in Table \ref{tab:processingTimesSE}.  Each cell provides the average processing time for a particular estimation procedure in a particular computational environment.  Estimation procedures are listed in rows, and computational resources are listed in columns. In Panel A we document times corresponding to OLS \eqref{eqn:AaronsonOLS}, and in Panel B we document times corresponding to IV \eqref{eqn:AaronsonIV}.  Within panels, we first present processing times for cumulative algorithms, and below this, standard regression or IV regression implementations.  We replicate each process for a range of computational systems.  These are all commercially available dedicated private servers with virtual memory limits (RAM) listed in column headers.  Although each column principally changes the quantity of RAM available on the server, a number of other more minor changes may occur on the server when changing configurations.  For this reason, in Panel C we provide a system benchmark which shows the server's performance on a standard numerical test.\footnote{Specifically, the system benchmark consists of counting the number of times that the computer is capable of calculating all the primes up to 10,000 in a 10 second span.  In this case, higher values of the system benchmark imply that the computer is faster.}

Considering the behaviour of cumulative algorithms in OLS, we see that irrespective of the computer's memory, the processing time is very similar, ranging from an average of 4.0 to 4.2 minutes when estimation is conducted by within-transformed variables as in \eqref{eqn:FE}.  This stability across systems is precisely the value of cumulative regression procedures.  Here, whether one has available a system with substantial memory (32 GB) or very little memory (1 GB), there is no change in performance.  The case of standard regression is of course different.  In systems with small memory capacity it is simply infeasible to load data and estimate parameters.  These cells are shaded in gray. Initially, when loading data into a system with 2 GB of memory is viable, the standard implementation proves markedly slower than its cumulative counterpart (13.67 minutes compared to 4.1 minutes). Yet, this instance is particularly noteworthy, revealing an overflow in the standard procedure at the threshold of memory usage limits.  Moving to larger memory capacities reduces the time of processing of within transformed models slightly (to around 8.3 to 8.6 minutes), but given the efficiency with which cumulative routines can conduct fixed effect procedures (Section \ref{sscn:groups}), standard implementations do not approach the performance of cumulative algorithms.  Alternative routines for dealing with fixed effects are observed to be slightly faster when they are feasible to be estimated, as observed when within-transformed models from cumulative algorithms are compared with hdfe implementations.  However, such models are infeasible in a range of cases where lower memory limits are binding, and similarly cannot return full parameter vectors.

When we wish to report full parameter vectors (``Full Estimation''), cumulative algorithms outperform comparison estimators.  Across system types, cumulative procedures require between 3.9 to 4.2 minutes for estimation, while similar procedures in non-cumulative models require around 10 minutes.  It is noteworthy that in the case of cumulative algorithms, full estimation including fixed effects is marginally faster than within-transformed versions, and this owes to the fact that within transformations require conducting group-level analyses to store matrices $\Sigma$ and $\Upsilon$ for each group, while models to return full fixed effects do not.

Turning to inference, we observe that when calculating clustered standard errors analytically, the processing time of cumulative algorithms approximately doubles.  This owes to the procedure laid out in Section \ref{sscn:inference} where the calculation of standard errors requires estimates of $\widehat{u}$, and hence requires reading data two times.  However, one particular advantage of fixed effect estimation is that if clustered standard errors are desired, running block bootstraps is virtually costless.  In Table \ref{tab:processingTimesSE} we run 500 bootstraps following the procedure laid out in Section \ref{sscn:bootstrap}, seeing that this adds nearly no processing time (row 1 compared to row 3 of Table \ref{tab:processingTimesSE}).  We do not display the comparison for standard clustered bootstrap processing times in the table in a \emph{non-cumulative} process, simply because this procedure increases linearly in the number of bootstraps, and is many orders of magnitude slower than cumulative procedures.  In general, all group-level processing in Table \ref{tab:processingTimesSE} occurs when data is sorted in any order, however if data is indeed sorted by group prior to processing, estimation time further falls given that group-level processing of data can occur more rapidly (row 4 of panel A).

In the case of IV models, results are even more stark given that non-cumulative calculations require housing larger matrices in memory, and make processing infeasible at a broader range of memory restrictions.  Indeed, in this case, even with data with only approximately 50 million observations, processing in non-cumulative routines becomes feasible at 32 GB of memory, but not before.  We again see a number of key take-aways from Panel B of Table \ref{tab:processingTimesSE}.  These are, firstly, that cumulative routines open up feasibility where previously estimation could not occur.  Secondly, even where processing can occur, cumulative algorithms are generally faster than non-cumulative procedures.  In some cases, presumably where data approaches the memory limits of the computer, comparisons suggest that even where feasible, non-cumulative methods may be considerably slower than cumulative counterparts.  This is especially clear in the comparison of IV models where all fixed effects are estimated (226 minutes) to a similar procedure in cumulative 2SLS (4.3 minutes).

It is important to note that this example should be conceived as simply illustrative of the fact that along with the conceptual interest of cumulative procedures, they do have practical implications too.  Our implementations are unlikely to compare to commercial implementations in terms of the efficiency of every internal process, so could be conceived as lower bounds of the performance in this particular setting.  In other languages and on other specific computational environments results will also vary.  If instead of conducting these tests in single threaded versions of Stata we conduct them in multiple processor environments, results are observed to be similar in nature (Appendix Table \ref{tab:processingTimesMP}). 
What's more, while this data has a reasonable number of observations (around 50 million), estimation is feasible with as little as 8GB of memory depending on the estimation procedure.  However, in cases with much larger data, such memory limits will be far more binding, suggesting that the feasibility benefit of cumulative algorithms may be more important.

%In this section, our objective is to evaluate the performance of our algorithm in relation to the standard version when estimating different models. To accomplish this, we utilize data from \shortciteN{Aaronsonetal2020}, comprising 5.4 million observations. We assess the algorithm's efficiency by measuring the computation time required to obtain estimates under three distinct types of inference: homoscedastic errors, heteroscedastic errors, and bootstrap cluster-based estimation. The comparison includes both OLS estimates and IV values. The summarized results are presented in Table 1.

%\begin{table}[htpb!]
%\caption{Execution Time Across Models and Accumulative Estimation}
%\input{tables/table2}
%\vspace*{0.1cm}
%
%\begin{minipage}{12cm}
%\footnotesize Notes:
%\end{minipage}
%\end{table}

\section{Discussion and Conclusions}
\label{scn:conclusion}
In this paper we show that regressions can be estimated row-wise, without ever requiring all of the information from a dataset at a given moment of time.  In the limit, we show that regressions can be estimated in a simple fashion where only a single line of data is read at a time and then forgotten, with only a small number of low dimensional matrices needing to be updated over time.  This fact appears to have been documented in very early computational work in economics \cite{Brownetal1953}, however only for a very specific variant of the problem.

Despite the ubiquity of procedures which work in a column-wise fashion in econometrics---based on results known since the seminal papers of \citeN{FrischWaugh33,Lovell63}---to our knowledge these row-wise results have received very scarce attention.  We show that these results hold for a broad class of regression models including OLS, IV, fixed effect, and regularised regression models such as Ridge and LASSO, and that the logic of these results holds both for both least squares and other M-class estimators such as probit and logit models.  These results are not approximations, providing exact calculations of regression coefficients, and can similarly generate standard errors and other regression statistics such as goodness of fit measures exactly.  Turning to inference, we show that these methods apply for both homoescedastic error assumptions, as well as for heteroscedasticity- and cluster-robust variance estimators.  We additionally document that certain bootstrap procedures and the definition of tuning parameters can be substantially more efficient with applications of the results from this paper.

As well as the theoretical interest of understanding the mechanics of frequently used regression models, these results have a large number of practical uses.  In both simulated and real data we show that these results imply that models which cannot be estimated on certain computers using standard commercial implementations of regression software can be estimated using the same programs but with our algorithms.  What's more, even in cases where a computer's memory does not limit data from being opened, we show that our algorithms can at times offer non-trivial speed ups over standard software.  In some sense these results could be cast as democratising the processing of big data via regression, as, provided that a sufficiently large hard disk is available, one could process extremely large datasets with very low memory requirements.  Indeed, there is no reason why these results could not be applied to data stored remotely, or on the web, implying that it would not be necessary to have access to super computers to process data of any size.

The results in this paper are likely lower bounds of the true performance of these algorithms.  We have implemented the results in this paper in a high level matrix processing language, and comparisons are made to programs written largely in faster low-level languages.  What's more, the procedures we use here process all data in a naive sequential fashion.  The results from this study make clear that regression is an embarrassingly parallelisable task, and so if data is stored or broken down into various chunks in different files, processing times likely also scale approximately linearly in parallel processes.  All told, these results suggest that transposing the ideas of Frisch-Waugh-Lovell to process regressions in a row-by-row rather than column-by-column fashion provides both an interesting theoretical proposition, as well as useful practical applications.
\end{spacing}
\newpage
\bibliographystyle{chicago}
\bibliography{refs}

\setcounter{table}{0}
\renewcommand{\thetable}{A\arabic{table}}

\setcounter{figure}{0}
\renewcommand{\thefigure}{A\arabic{figure}}

\clearpage
\appendix
\begin{center}
\textbf{Appendices for: Frisch-Waugh-Lovell$^\prime$} \\
Damian Clarke, Nicol\'as Paris Torres \& Benjam\'in Villena-Rold\'an \\
Not for print.
\end{center}

\newpage
\begin{landscape}
\section{A Simple Visualisation in Matrix Form}
\label{app:simpleVis}
Consider a simple illustration of the generation of $X^\prime X$ based on a case where $X$ is a matrix consisting of $N=4$ observations and $K=3$ independent variables.  For ease of visualisation, we will denote $X$ as follows:
\begin{equation}
\nonumber
 \left(
  \begin{array}{ccc}
    \textcolor{blue}{x_{1,1}}& \textcolor{blue}{x_{1,2}} & \textcolor{blue}{x_{1,3}}  \\
    x_{2,1}& x_{2,2} & x_{2,3}  \\
    x_{3,1}& x_{3,2} & x_{3,3}  \\
    \textcolor{red}{x_{4,1}}& \textcolor{red}{x_{4,2}} & \textcolor{red}{x_{4,3}}  \\
  \end{array}\right) 
\end{equation}
where all realisations of independent variables for observation $i=1$ are coloured in blue, and for observation $i=4$ are coloured in red.  Now note that $X^\prime X$ can be written in extensive form as below:
\begin{eqnarray}
 X^\prime X &\equiv&
 \left(
  \begin{array}{ccccc}
    \textcolor{blue}{x_{1,1}}& x_{2,1} & x_{3,1} & \textcolor{red}{x_{4,1}}\\
    \textcolor{blue}{x_{1,2}}& x_{2,2} & x_{3,2} & \textcolor{red}{x_{4,2}}\\
    \textcolor{blue}{x_{1,3}}& x_{2,3} & x_{3,3} & \textcolor{red}{x_{4,3}}\\
  \end{array}\right) 
 \left(
  \begin{array}{ccc}
    \textcolor{blue}{x_{1,1}}& \textcolor{blue}{x_{1,2}} & \textcolor{blue}{x_{1,3}}  \\
    x_{2,1}& x_{2,2} & x_{2,3}  \\
    x_{3,1}& x_{3,2} & x_{3,3}  \\
    \textcolor{red}{x_{4,1}}& \textcolor{red}{x_{4,2}} & \textcolor{red}{x_{4,3}}  \\
  \end{array}\right)  \nonumber \\
  &=& \left( 
  \begin{array}{ccc}
     \textcolor{blue}{x^2_{1,1}}+x^2_{2,1}+x^2_{3,1}+\textcolor{red}{x^2_{4,1}} & \textcolor{blue}{x_{1,1}}\textcolor{blue}{x_{1,2}} + x_{2,1}x_{2,2} + x_{3,1}x_{3,2} + \textcolor{red}{x_{4,1}x_{4,2}} & \textcolor{blue}{x_{1,1}}\textcolor{blue}{x_{1,3}} + x_{2,1}x_{2,3} + x_{3,1}x_{3,3} + \textcolor{red}{x_{4,1}x_{4,3}}  \\ 
    \textcolor{blue}{x_{1,2}}\textcolor{blue}{x_{1,1}} + x_{2,2}x_{2,1} + x_{3,2}x_{3,1} + \textcolor{red}{x_{4,2}x_{4,1}} & 
     \textcolor{blue}{x^2_{1,2}}+x^2_{2,2}+x^2_{3,2}+\textcolor{red}{x^2_{4,2}} & \textcolor{blue}{x_{1,3}}\textcolor{blue}{x_{1,1}} + x_{2,3}x_{2,1} + x_{3,3}x_{3,1} + \textcolor{red}{x_{4,3}x_{4,1}}  \\ 
    \textcolor{blue}{x_{1,3}}\textcolor{blue}{x_{1,1}} + x_{2,3}x_{2,1} + x_{3,3}x_{3,1} + \textcolor{red}{x_{4,3}x_{4,1}} & 
     \textcolor{blue}{x_{1,3}}\textcolor{blue}{x_{1,2}} + x_{2,3}x_{2,2} + x_{3,3}x_{3,2} + \textcolor{red}{x_{4,3}x_{4,2}} & 
      \textcolor{blue}{x^2_{1,3}}+x^2_{2,3}+x^2_{3,3}+\textcolor{red}{x^2_{4,3}}\\ 
  \end{array}
  \right)\nonumber \\
  &=& \left( 
  \begin{array}{ccc}
    \textcolor{blue}{x^2_{1,1}} & \textcolor{blue}{x_{1,1}}\textcolor{blue}{x_{1,2}} & \textcolor{blue}{x_{1,1}}\textcolor{blue}{x_{1,3}} \\
     \textcolor{blue}{x_{1,2}}\textcolor{blue}{x_{1,1}}  & \textcolor{blue}{x_{1,3}}\textcolor{blue}{x_{1,1}} & \textcolor{blue}{x_{1,3}}\textcolor{blue}{x_{1,1}} \\
      \textcolor{blue}{x_{1,3}}\textcolor{blue}{x_{1,1}} &  \textcolor{blue}{x_{1,3}}\textcolor{blue}{x_{1,2}} &   \textcolor{blue}{x^2_{1,3}} \\
  \end{array}
  \right)
  +
  \left( 
  \begin{array}{ccc}
  x^2_{2,1} & x_{2,1}x_{2,2} & x_{2,1}x_{2,3} \\
  x_{2,2}x_{2,1} & x^2_{2,2} & x_{2,3}x_{2,1} \\
  x_{2,3}x_{2,1} & x_{2,3}x_{2,2} & x^2_{2,3} \\
  \end{array}
  \right)
  +
  \cdots
  +
\left( 
  \begin{array}{ccc}
  \textcolor{red}{x^2_{4,1}} & \textcolor{red}{x_{4,1}x_{4,2}} & \textcolor{red}{x_{4,1}x_{4,3}}  \\
  \textcolor{red}{x_{4,2}x_{4,1}} & \textcolor{red}{x^2_{4,2}} &\textcolor{red} {x_{4,3}x_{4,1}} \\
  \textcolor{red}{x_{4,3}x_{4,1}}  & \textcolor{red}{x_{4,3}x_{4,2}}  & \textcolor{red}{x^2_{4,3}} \\
  \end{array}
  \right). \label{eqn:extensiveMat} 
\end{eqnarray}
The key takeaway here is that one can simply arrive to the $K\times K$ matrix $X^\prime X$ by calculating 4 $K\times K$ matrices based on cross-products for each observation (ie working an observation at a time), and finally summing across these matrices (as in (\ref{eqn:extensiveMat})).

\end{landscape}

\section{Appendix Figures and Tables}
\begin{figure}[htpb!]
  \caption{Sample size and execution time of standard versus cumulative (Multiple Cores)} 
  \label{tab:processingTimesMP}
  \begin{subfigure}{0.49\textwidth}
    \includegraphics[scale=0.52]{./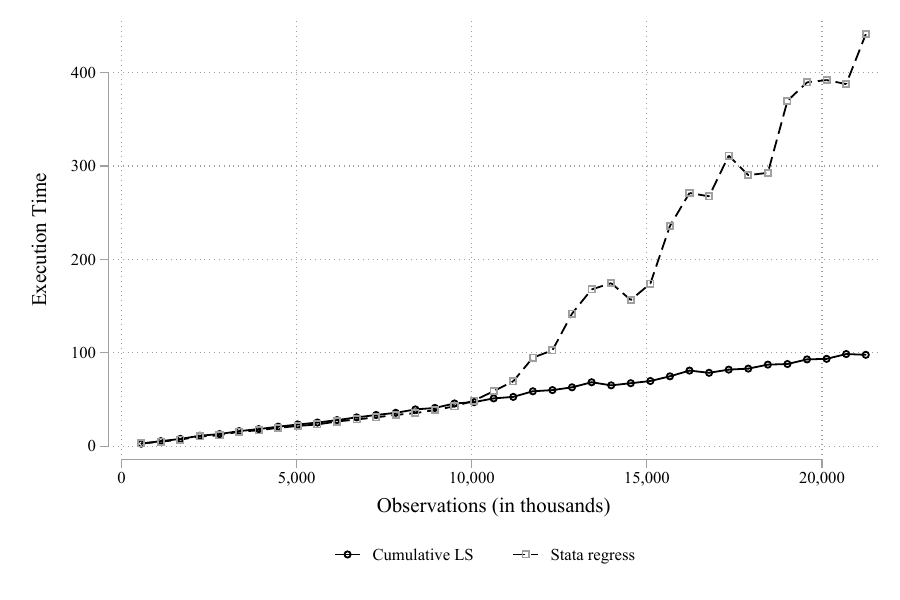}
  \caption{5 Independent Variables}
  \end{subfigure}
  \begin{subfigure}{0.49\textwidth}
    \includegraphics[scale=0.52]{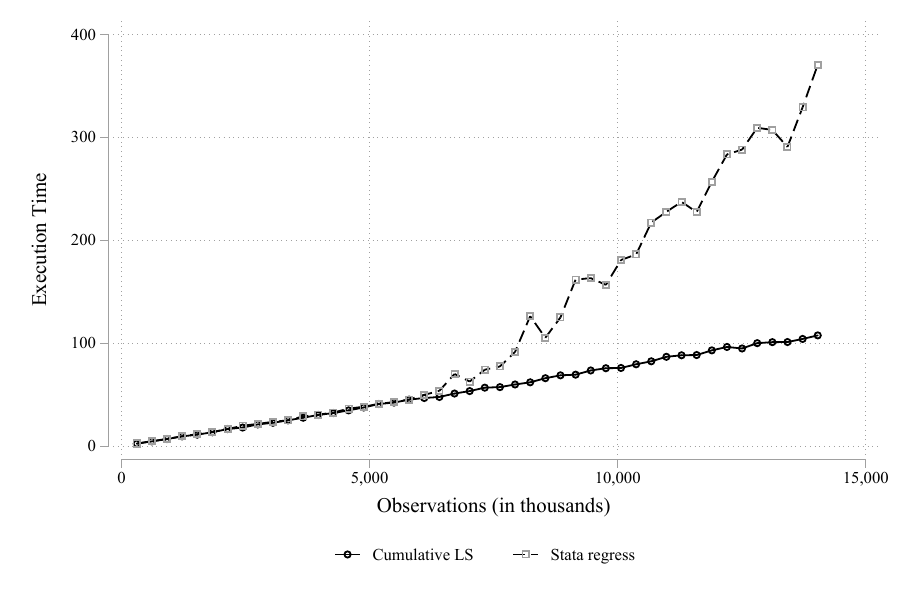}
  \caption{10 Independent Variables}
  \end{subfigure}

  \begin{subfigure}{0.49\textwidth}
    \includegraphics[scale=0.52]{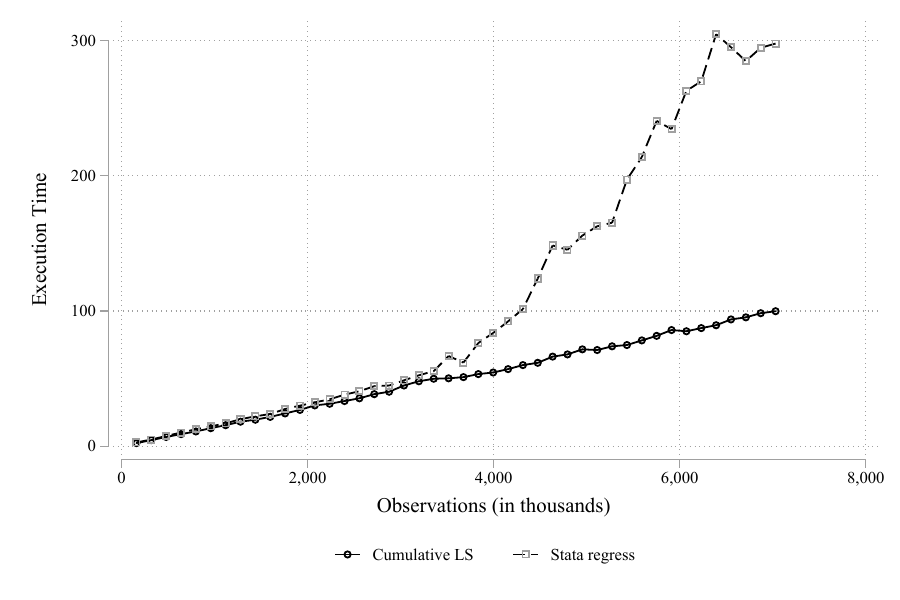}
  \caption{20 Independent Variables}
  \end{subfigure}
  \begin{subfigure}{0.49\textwidth}
    \includegraphics[scale=0.52]{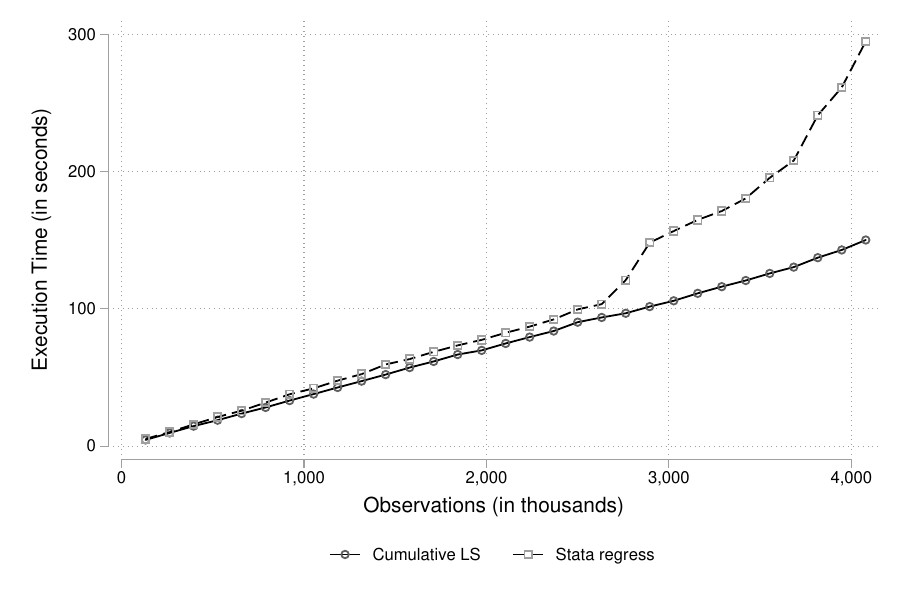}
  \caption{50 Independent Variables}
  \end{subfigure}
  \floatfoot{\textbf{Notes}: All times refer to the computation time of reading data into memory and estimating an ordinary least squares regression. Tests are all conducted on a system with 1GB of RAM, with no other processes running, using Stata MP (2 cores).  In each case, tests are conducted up to the point at which there is insufficient RAM to open the data, thus precluding the estimation of standard regression models. Beyond this point, it is still feasible to estimate parameters using Cumulative Least Squares.}
\end{figure}

%Here is a plot with current results:
%\begin{figure}
%    \centering
%    \includegraphics[scale=0.7]{}
%    \caption{Current results (sequential vs regress)}
%    \label{fig:simulation-02-ivan}
%\end{figure}

\begin{figure}
    \centering
    \caption{Relative Performance of OLS to CLS by Block Size}
    \label{fig:block3d}
    \begin{subfigure}{0.49\textwidth}
    \includegraphics[scale=0.52]{./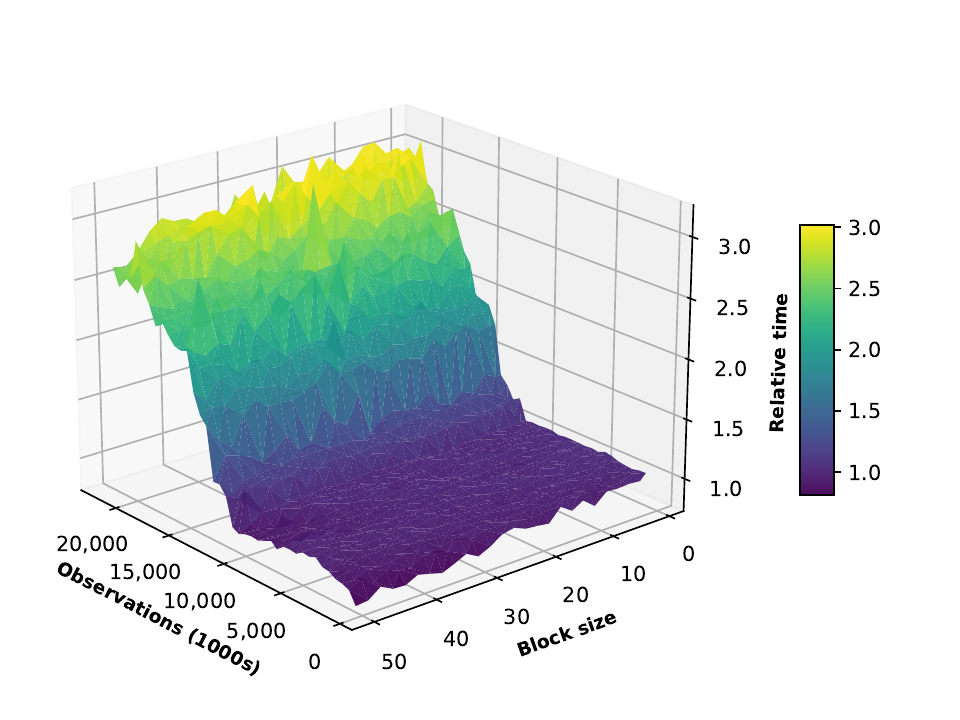}
    \caption{5 Independent Variables}
    \end{subfigure}
    \begin{subfigure}{0.49\textwidth}
    \includegraphics[scale=0.52]{./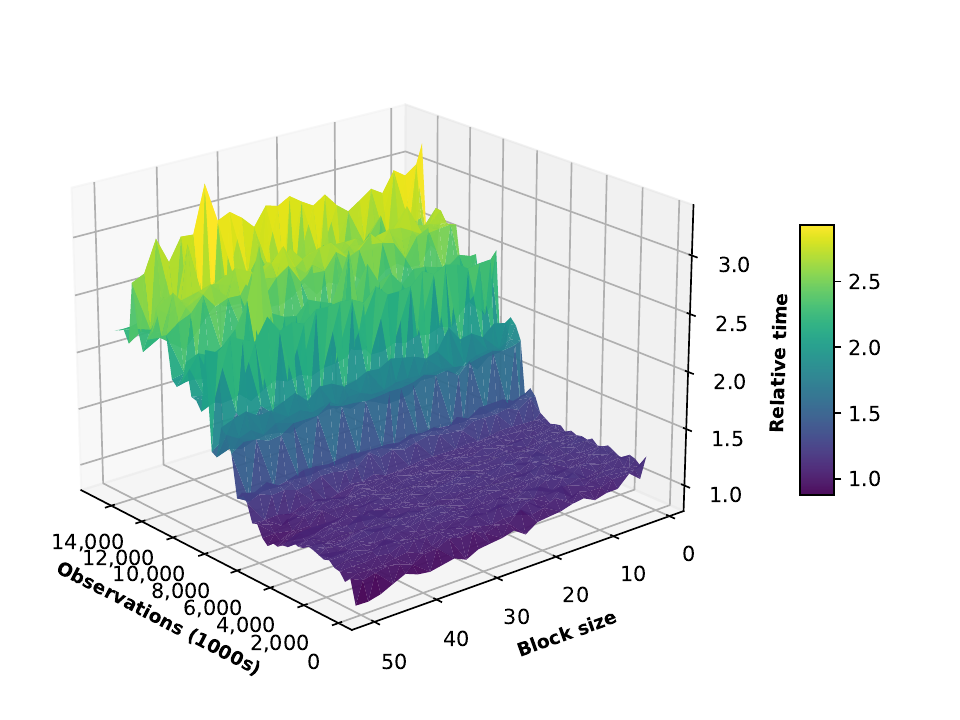}
    \caption{10 Independent Variables}
    \end{subfigure}

    \begin{subfigure}{0.49\textwidth}
    \includegraphics[scale=0.52]{./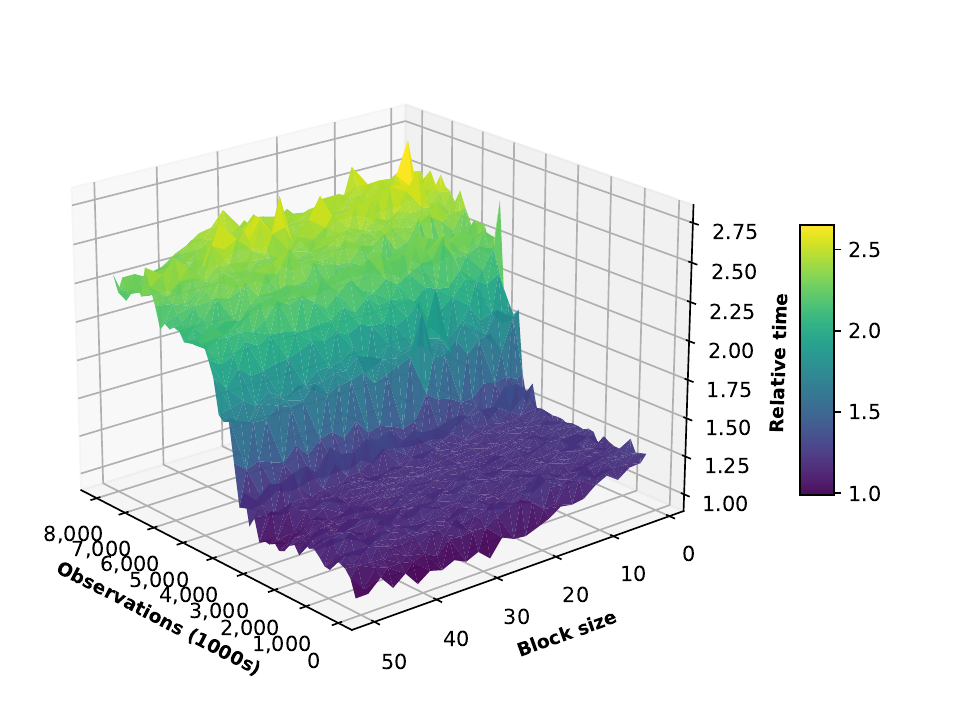}
    \caption{20 Independent Variables}
    \end{subfigure}
    \begin{subfigure}{0.49\textwidth}
    \includegraphics[scale=0.52]{./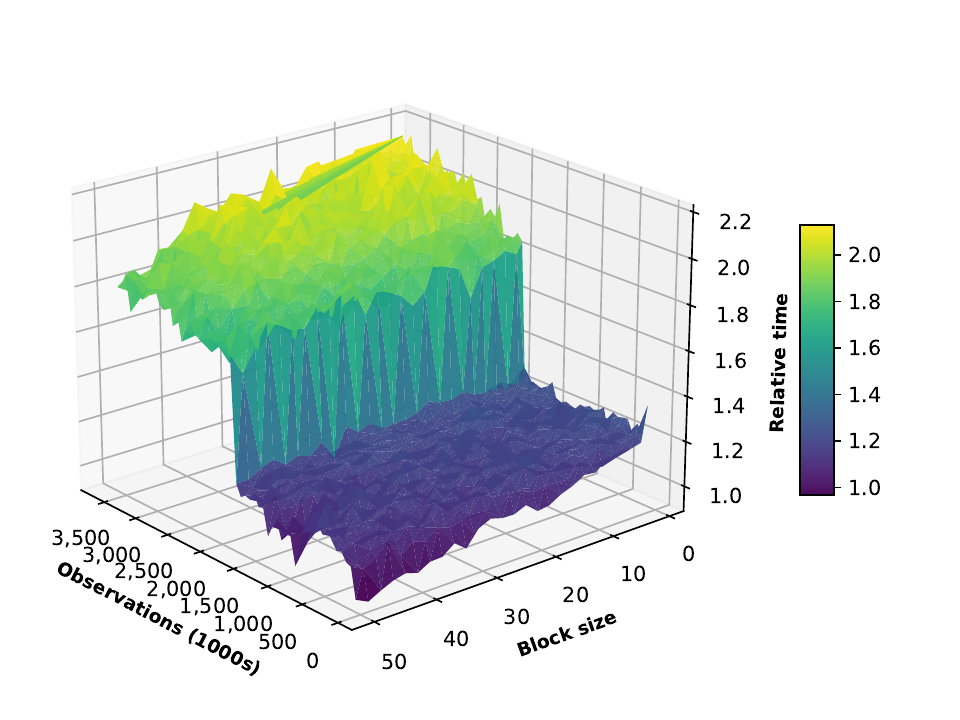}
    \caption{50 Independent Variables}
    \end{subfigure}
    \floatfoot{\textbf{Notes}: Surface plots present the ratio of computation times of Stata's native regression command to cumulative least squares.  Values less than 1 imply non-cumulative procedures are faster than cumulative procedures, and vice versa for values greater than 1. All times refer to the computational time of reading data into memory and estimating a cumulative least squares regression. Tests are all conducted on a system with 1GB of RAM, with no other processes running.  Block sizes indicated on the horizontal axis refer to the proportion of the full dataset (\emph{eg} 50 refers to 2 blocks each covering 50\% of the data, 1 refers to 100 blocks each covering 1\% of the data).}
\end{figure}

%MP TABLE BRING BACK IN
\begin{table}[htpb!]
\caption{\small Execution Time Across Models in a Labour Market Participation Example (MP): OLS and IV}
\begin{tabular}{lcccccc}\toprule
& \multicolumn{5}{c}{Memory Constraints} \\ \cmidrule(r){2-6}
& 1 GB & 2 GB & 8 GB & 16 GB & 32 GB \\ \midrule
\multicolumn{6}{l}{\underline{Panel A: OLS}}\\
\multicolumn{6}{l}{\textbf{Cumulative Procedures}}\\
\ \ \ \ Within-transformation&2.973&3.153&3.266&3.091&3.303\\
\ \ \ \ Within, Clustered&5.735&6.003&6.315&5.973&6.438\\
\ \ \ \ Within, Clustered Bootstrap&3.070&3.002&3.218&3.124&3.342\\
\ \ \ \ Clustered Bootstrap (s)&2.584&2.760&2.930&2.639&3.131\\
\ \ \ \ Full Estimation&2.843&2.895&3.181&2.960&3.185\\
\multicolumn{6}{l}{\textbf{Standard Procedures}}\\
Within-transformation (areg)&\cellcolor{lightgray}--&\cellcolor{lightgray} --&4.511&4.187&3.846\\
\ \ \ \ Within, Clustered&\cellcolor{lightgray}--&\cellcolor{lightgray}--&5.368&5.257&4.532\\
\ \ \ \ Full Estimation&\cellcolor{lightgray}--&\cellcolor{lightgray} --&4.500&4.143&3.630\\
\ \ \ \ Within-transformation (hdfe)&\cellcolor{lightgray} --&\cellcolor{lightgray} --&\cellcolor{lightgray} --&3.397&3.177\\
\ \ \ \ Within, Clustered (hdfe)&\cellcolor{lightgray} --&\cellcolor{lightgray} --&\cellcolor{lightgray} --&3.537&3.304\\\midrule
\multicolumn{6}{l}{\textbf{Panel B: IV}}\\
\multicolumn{6}{l}{\textbf{Cumulative Procedures}}\\
\ \ \ \ Within-transformation&3.371&3.435&3.501&3.316&3.651\\
\ \ \ \ Within, Clustered&6.351&6.498&6.798&6.374&6.950\\
\ \ \ \ Within, Clustered Bootstrap&3.242&3.312&3.479&3.306&3.468\\
\ \ \ \ Winthin, Clustered Bootstrap (s) &3.036&3.077&3.192&3.276&3.585\\
\ \ \ \ Full Estimation &2.957&3.003&3.142&2.981&3.538\\
\multicolumn{6}{l}{\textbf{Standard Procedures}}\\
\ \ \ \ Within-transformation (xtivreg)&\cellcolor{lightgray}--&\cellcolor{lightgray}--&\cellcolor{lightgray}--&\cellcolor{lightgray}--&4.968\\
\ \ \ \ Within, Clustered&\cellcolor{lightgray}--&\cellcolor{lightgray}--&\cellcolor{lightgray}--&\cellcolor{lightgray}--&10.916\\
\ \ \ \ Full Estimation&\cellcolor{lightgray}--&\cellcolor{lightgray}--&\cellcolor{lightgray}--&\cellcolor{lightgray}--&45.350\\
\ \ \ \ Within-transformation (hdfe)&\cellcolor{lightgray} --&\cellcolor{lightgray} --&\cellcolor{lightgray} --&\cellcolor{lightgray} --&4.834\\
\ \ \ \ Within, Clustered (hdfe)&\cellcolor{lightgray} --&\cellcolor{lightgray} --&\cellcolor{lightgray} --&\cellcolor{lightgray} --&6.597\\ \midrule
\textbf{Panel C: System Benchmark}&866.2&852.7&856.4&842.4&843.1\\\bottomrule
\end{tabular}

\vspace*{0.1cm}

\begin{minipage}{13.35cm}
\footnotesize \textbf{Notes}: Each cell reports average processing time in minutes of a particular estimation procedure based on the memory constraints listed in column headers.  All averages are taken over 10 estimations.  Cells are coloured gray when estimation cannot occur due to memory limits.  All methods are estimated in Stata 18 MP.  For clustered bootstrap errors, (s) stands for ``sorted'', implying that the database is sorted by clusters prior to processing. System benchmark is a standard test of processor capacity run on the operating system to provide a baseline comparison of server performance across columns.
\end{minipage}
\end{table}

\clearpage
\section{Data Appendix}
\label{scn:data}
We collate original data from IPUMS and the Demographic and Health Survey (DHS) repository using all census data and DHS waves described in \citeN{Aaronsonetal2020}.  This results in 51,449,770 observations drawn from 434 census files for 106 countries covering years 1787 to 2015. The geographical coverage of the data is described in Figure \ref{fig:countries}, and the temporal coverage is described in Figure \ref{fig:years}.

We follow replication materials of \shortciteN{Aaronsonetal2020} to generate all variables, and replicate their results exactly.  We follow their inclusion criteria of working with women aged 21 to 35 who have at least two children, all of whom are 17 or younger.  As described in \shortciteN{Aaronsonetal2020} families are excluded where information is missing on child gender or mother's age, and mothers are not included in the sample if they live in group quarters or give birth before the age of 15.  Summary statistics of all data following the processes described in \shortciteN{Aaronsonetal2020} are included below in Table \ref{tab:SumStats}.  

\begin{figure}[htpb!]
  \caption{Number of Observations by Country}
  \label{fig:countries}
  \includegraphics[scale=0.65]{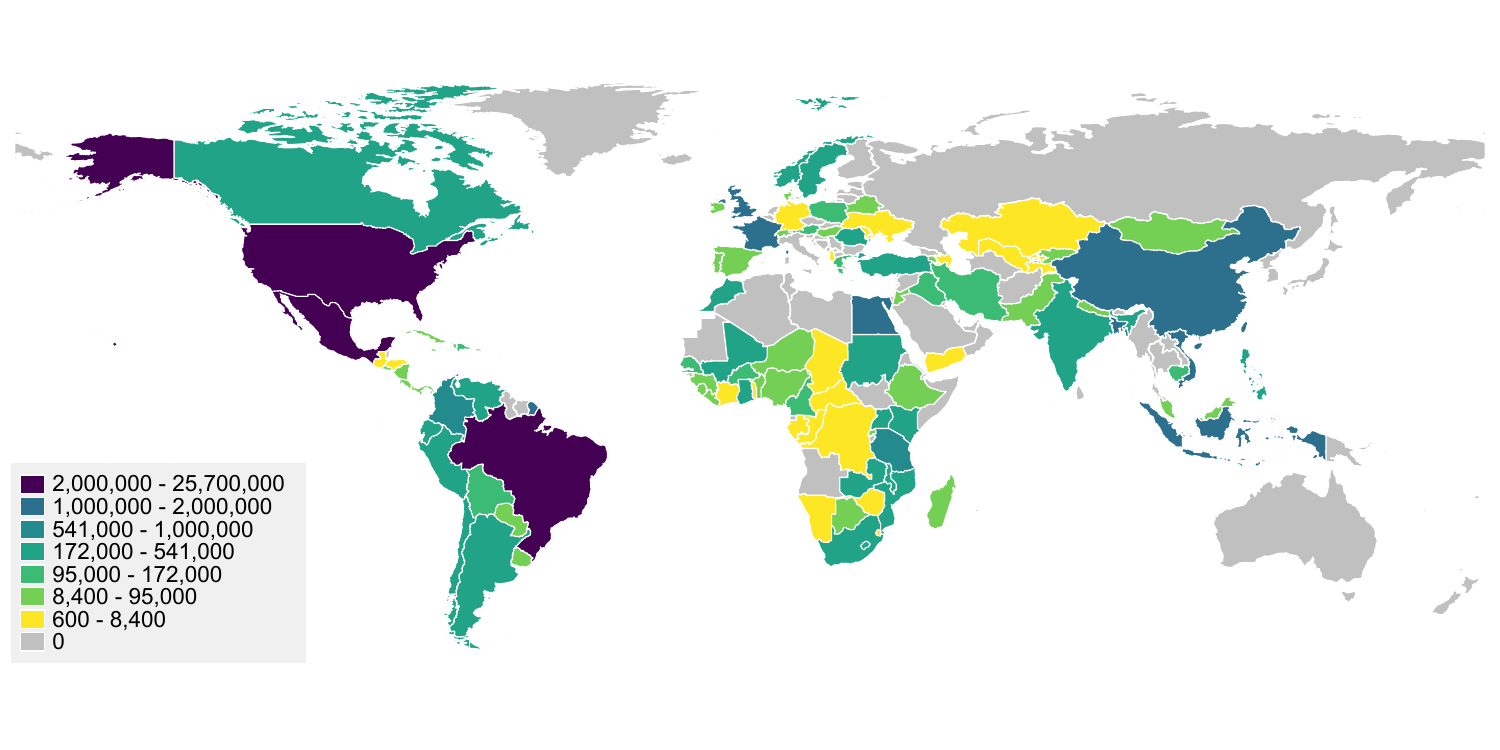}
  \floatfoot{\textbf{Notes:} Values plotted refer to the total frequency of observations used in estimating samples.  These are pooled across all years.  Countries coloured grey have no available microdata on IPUMS or DHS.}
\end{figure}

\begin{figure}[H]
  \caption{Number of Observations by Year}
  \label{fig:years}
  \includegraphics[scale=0.7]{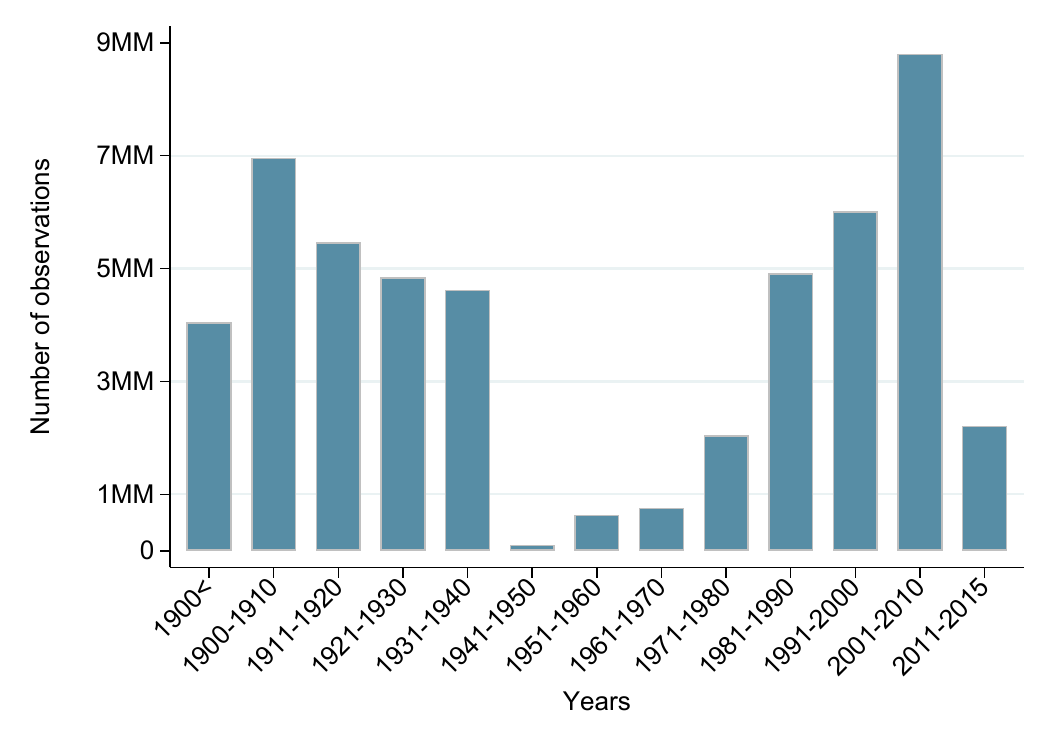}
\begin{minipage}{11cm}
\footnotesize \textbf{Notes:} Values refer to the total frequency of observations used in estimating samples.  Years refer to the year the data was collected.
\end{minipage}
\end{figure}

\begin{table}[htpb!]
\caption{Summary Statistics -- Principal Variables}
\begin{tabular}{lccccc} \toprule
VARIABLES & Obs & Mean & Std. Dev. & Min & Max \\ \midrule
\rule[-0.4cm]{-2.5pt}{0pt}
Labour Force Participation & 51,449,770 & 0.263 & 0.441 & 0 & 1 \\
\textbf{Covariables}  &  & &  & & \\
\cmidrule(l{0.5em}r{1.5em}){1-1}
 \ \  Fertility & 51,449,770 & 0.525 & 0.499 & 0 & 1 \\
\ \  Gender of first child & 51,449,770 & 0.508 & 0.499 & 0 & 1 \\
\ \  Mother Age & 51,449,770 & 29.43 & 3.859 & 21 & 35 \\
\ \  Mother age at first birth & 51,449,770 & 21.04 & 3.300 & 15 & 35 \\
\rule[-0.3cm]{-3pt}{0pt}
\ \  Weights & 51,449,770  & 1 & 0.390 & 0.0007 & 341.66 \\
\rule[-0.4cm]{-2.5pt}{0pt}
\ \  Years & 51,449,770 & 1954.95 & 44.169 & 1787 & 2015 \\ 
\textbf{Instrument}  &  & &  & & \\
\cmidrule(l{0.5em}r{1.5em}){1-1}
\ \  Twin & 51,449,770 & 0.0105 & 0.102 & 0 & 1 \\
 \bottomrule
\end{tabular}

\label{tab:SumStats}
\vspace{0.2cm}

\begin{minipage}{13.2cm}
\footnotesize \textbf{Notes:} Summary statistics are displayed based on all observations and data cleaning procedures described in \shortciteN{Aaronsonetal2020}.  Sample consists of women giving birth at the age of 15 or above, and ages 21 to 35 at the time of data collection.  All selection criteria follows \shortciteN{Aaronsonetal2020}.
\end{minipage}
\end{table}

\clearpage
\section{An Updating Estimation Procedure}
\label{scn:ULS}
Throughout the paper we work with a cumulative procedure in which matrices $X'X$ and $X'y$ are accumulated in a step-wise fashion, and estimates $\widehat\beta_{OLS}$ (or similar for other estimation methodologies) are generated only after data is read in its entirety.  Alternative procedures can be used in which iterations occur over sequential iterations of $\widehat\beta_{OLS}$ estimates themselves, though these are less efficient than the cumulative procedures laid out in the body of the paper. 

To see this, we will use identical notation to that laid out in the paper.  Suppose we wish to estimate a regression between a dependent variable $y$ and a set of $K$ covariates $X_1,X_2,...,X_K$.  The database that can be partitioned into $J$ samples. The whole sample size of the database is $N$, but computing an OLS regression with all the data is unfeasible due to memory constrains. As an alternative procedure, we could run a regression with the sample $j=1$ and update the result with the other $J-1$ samples.  To fix ideas, suppose we have a samples 1 and 2 and we want to compute an OLS estimator for the whole sample denoted by the subindex $1\sim2$. Hence,
\begin{align*}
  \hat{\beta}_{1\sim2} &= \left( \left(
  \begin{array}{cc}
    X_1' & X_2' \\
  \end{array}\right)\left(
  \begin{array}{c}
    X_1 \\
    X_2 \\
  \end{array}
  \right)
  \right)^{-1}  \left(
  \begin{array}{cc}
    X_1' & X_2' \\
  \end{array}\right) \left(
  \begin{array}{c}
    y_1 \\
    y_2 \\
  \end{array}
  \right) = (X_1'X_1 + X_2'X_2)^{-1}(X_1'y_1 +X_2'y_2) \\ &= (\Sigma_1 + \Sigma_2)^{-1}(\Upsilon_1 + \Upsilon_2)
\end{align*}
where $X_j'X_j \equiv \Sigma_j $ and $X_j'y_j \equiv \Upsilon_j$.
The challenge is to compute $\hat{\beta}_{1\sim2}$ using estimates from the two samples separately, so that we can avoid storing very large databases in memory. Trivially, as laid out in the body of the paper, this can be done cumulatively.  Alternatively, we can make use of a result of the inverse of the sum of two matrices by \citeN{Miller81}. A more general perspective in this theory, including application to linear least squares is \citeN{Hager89}. The application of this result can be proven easily.
\begin{lemma}
  The inverse of the sum of two matrices can be obtained as
\begin{align*}
   (\Sigma_1 + \Sigma_2)^{-1} &\equiv \Sigma_{1\sim2}^{-1}= \Sigma_1^{-1} -\Sigma_1^{-1}\Sigma_2 (I +\Sigma_1^{-1} \Sigma_2)^{-1} \Sigma_1^{-1}  \\
\end{align*}
\end{lemma}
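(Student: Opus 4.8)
The plan is to verify the identity by direct multiplication rather than to derive it abstractly, since the right-hand side is given explicitly and the check is short. Write $A \equiv \Sigma_1$ and $B \equiv \Sigma_2$ for brevity, and let $R \equiv A^{-1} - A^{-1}B(I + A^{-1}B)^{-1}A^{-1}$ denote the candidate inverse appearing in the statement. The goal is to show $R(A+B) = I$; since $A+B$ is square, a one-sided inverse is automatically two-sided, so this suffices.

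First I would record the invertibility conditions needed for every inverse in the statement to make sense. We require $\Sigma_1 = X_1'X_1$ nonsingular (which holds whenever the first block has full column rank) and $\Sigma_1 + \Sigma_2 = X'X$ nonsingular (already required for $\widehat{\beta}_{1\sim2}$ to exist). The only remaining inverse is $(I + A^{-1}B)^{-1}$, and here the key observation is the factorisation $I + A^{-1}B = A^{-1}(A+B)$: this is a product of two nonsingular matrices, hence nonsingular, so the formula is well-defined and no extra hypothesis is needed. This factorisation is also what makes the cancellation in the next step work.

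Then I would carry out the multiplication, using $A^{-1}(A+B) = I + A^{-1}B$ twice:
\begin{align*}
R(A+B) &= A^{-1}(A+B) - A^{-1}B(I + A^{-1}B)^{-1}A^{-1}(A+B) \\
&= (I + A^{-1}B) - A^{-1}B(I + A^{-1}B)^{-1}(I + A^{-1}B) \\
&= (I + A^{-1}B) - A^{-1}B = I,
\end{align*}
where the middle term collapses because $(I+A^{-1}B)^{-1}(I+A^{-1}B) = I$. Since $A+B$ is square and nonsingular, this yields $R = (A+B)^{-1}$, which is the claim, and reading the displayed chain back through $A = \Sigma_1$, $B = \Sigma_2$ gives the stated form.

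I do not expect a genuine obstacle here; the only thing to be careful about is bookkeeping of where the inverses live, and in particular not silently assuming $\Sigma_2$ is invertible (it need not be). If a less hands-on derivation is preferred, the statement is precisely the special case $U = \Sigma_2$, $C = I_K$, $V = I_K$ of the Woodbury / Miller matrix-inversion identity already cited in the text, and one could simply invoke that instead; I would include the direct verification because it is self-contained and makes transparent that invertibility of $\Sigma_1$ and of $\Sigma_1+\Sigma_2$ is all that is assumed.
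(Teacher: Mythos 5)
Your proof is correct: the computation $R(\Sigma_1+\Sigma_2)=I$ goes through exactly as you write it, the cancellation hinging on the factorisation $I+\Sigma_1^{-1}\Sigma_2=\Sigma_1^{-1}(\Sigma_1+\Sigma_2)$, and your bookkeeping of which matrices must be invertible is accurate. The paper, however, takes a different (much terser) route: its entire proof is the single sentence that the lemma follows as a direct application of the Woodbury matrix-inverse lemma, i.e.\ precisely the specialisation $U=\Sigma_2$, $C=V=I_K$ that you mention as a fallback in your closing remark. What your direct verification buys over that citation is self-containedness and transparency about hypotheses: you show explicitly that only nonsingularity of $\Sigma_1$ and of $\Sigma_1+\Sigma_2$ is needed, that $(I+\Sigma_1^{-1}\Sigma_2)^{-1}$ exists automatically because $I+\Sigma_1^{-1}\Sigma_2$ is a product of two nonsingular matrices, and that $\Sigma_2$ may be singular — none of which the paper's one-line appeal makes visible. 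What the paper's route buys is brevity and a pointer to the general family of identities (Woodbury, and the Miller and Hager references it also cites) of which this is an instance. Both are valid; yours is the more instructive, and if anything it would strengthen the appendix, since the paper nowhere records the invertibility conditions under which its displayed formula is well-defined.
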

\begin{proof} Follows as a direct application of \citeN{Woodbury50} matrix inverse lemma \end{proof}

Using Lemma 1, we can iterate on $\widehat\beta_{OLS}$, without requiring that all of the elements stored in cumulative procedures are accumulated across iterations.  We define the factor 
\[
\Omega_{1\sim 2} \equiv I_K - \Sigma_1^{-1} \Sigma_2 \left(I_K +\Sigma_1^{-1} \Sigma_2\right)^{-1} ,
\]
where $I_K$ stands for an identity matrix of dimension $K$, the number of regressors in the model. Hence, the joint $\Sigma$ matrix can be expressed as
\begin{equation*}
  \Sigma_{1\sim2}^{-1}= \Omega_{1\sim 2}  \Sigma_1^{-1}
\end{equation*}
Therefore, the joint OLS estimator becomes
\begin{equation}
\label{eqn:ULS}
  \widehat{\beta}_{1\sim2} = \Omega_{1\sim 2} \left(\widehat{\beta}_1 + \Sigma_1^{-1} \Upsilon_2\right).
\end{equation}
This suggests an iterative procedure for estimating OLS parameters generalising the above result to $J$ blocks, rather than 2 blocks.  This is defined as Updated Ordinary Least Squares in Algorithm  \ref{alg:ULS}.  
\begin{algorithm}
\caption{Updated Ordinary Least Squares}
\label{alg:ULS}
Inputs: Database consisting of ($y$,$X$), block size $b$. \\
Result: Point estimate $\widehat\beta_{OLS}$ (and potentially point estimates updated at each step, $\widehat\beta_{1\sim j, OLS}$). \\ \ \\
1. Set $i=1$ and $j=b$. Load into memory partition of data covering $y,X$ in observations $i-j$. \;
2. Calculate $\Sigma_1$ and $\widehat\beta_1$ \;
3. If observations $i$-$j$ contain end of file, set $e=1$, otherwise, set $e=0$ \;
\While{$e\neq1$}
{
4. Replace $i=i+b$ and $j=j+b$. Load into memory partition of data covering $y,X$ in observations $i-j$. \;
5. Calculate $\Sigma_j$ and $\Upsilon_j$. \;
6. Calculate $\Omega_{1\sim j}=I_K-\Sigma^{-1}_{j-1}\Sigma_j\left(I_K+\Sigma^{-1}_{j-1}\Sigma_j\right)^{-1}$
and
$\widehat{\beta}_{1\sim j} = \Omega_{1\sim j} \left(\widehat{\beta}_{j-1} + \Sigma_{j-1}^{-1} \Upsilon_j\right)$ \;
7. If observations $i$-$j$ contain end of file, set $e=1$, otherwise, set $e=0$ \;
}\textbf{end}\\ 
\end{algorithm}

While this procedure allows for the calculation of updated values of $\widehat\beta_{1\sim j}$ at each step, and additionally avoids the need of passing $\Upsilon_{1\sim j}$ across steps, each iteration involves one matrix inversion of size $K$ in step 6.  Indeed, for any given quantity of variables $K$ and block size $b$, the cumulative algorithm strictly dominates the updating algorithm in terms of total computations (and hence computation time).  This owes to the fact that the same calculations of order $\mathcal{O}(NK^2)+O(NK)$ discussed in Section \ref{scn:optimal} are required in calculating  $\Sigma_{j}$ and $\Upsilon_j$ as inputs for \eqref{eqn:ULS}, strictly \emph{more} elements are required to be summed in iterating on $\Omega_{1\sim j}$ instead of $\Sigma_{1\sim j}$ and $\Upsilon_{1\sim j}$, %($\mathcal{O}()+\mathcal{O}()$)
and additionally, a matrix inversion is required at each step in calculating \eqref{eqn:ULS}.  For this reason, we focus on cumulative least squares algorithms throughout this paper.

\end{document}